\newtheorem{theorem}{Theorem}[section]
\newtheorem{lemma}[theorem]{Lemma}
\newtheorem{definition}[theorem]{Definition}
\newtheorem{question}[theorem]{Question}
\newtheorem{proposition}[theorem]{Proposition}
\newtheorem{remark}[theorem]{Remark}
\numberwithin{equation}{section}
\begin{document}

\title{On the lower bound for the length of minimal codes}

\author{Martin Scotti}
\address{Universit\'e Paris 8, Laboratoire de G\'eom\'etrie, Analyse et Applications, LAGA, Universit\'e Sorbonne Paris Nord, CNRS, UMR 7539, France.}
\email{martin.scotti@etud.univ-paris8.fr}
\thanks{M.~Scotti is supported by the ANR-21-CE39-0009 - BARRACUDA (French \emph{Agence Nationale de la Recherche}).}
\date{}
\maketitle

\begin{abstract}
In recent years, many connections have been made between minimal codes, a classical object in coding theory, and other remarkable structures in finite geometry and combinatorics. One of the main problems related to minimal codes is to give lower and upper bounds on the length $m(k,q)$ of the shortest minimal codes of a given dimension $k$ over the finite field $\mathbb{F}_q$. It has been recently proved that $m(k, q) \geq (q+1)(k-1)$. 

In this note, we prove that $\liminf_{k \rightarrow \infty} \frac{m(k, q)}{k} \geq (q+ \varepsilon(q) )$, where $\varepsilon$ is an increasing function such that $1.52 <\varepsilon(2)\leq \varepsilon(q) \leq \sqrt{2} + \frac{1}{2}$. Hence, the previously known lower bound is not tight for large enough $k$. We then focus on the binary case and prove some structural results on minimal codes of length $3(k-1)$. As a byproduct, we are able to show that, if $k = 5 \pmod 8$ and for other small values of $k$, the bound is not tight.
\end{abstract}

\section{Introduction}

In coding theory, minimal codes are an important centre of interest: in the binary case, they coincide with so-called intersecting codes, a classical and intensively studied object (see for example \cite{CZ, CL, Sloane}).
They also have interesting cryptographic properties: in particular, Massey showed an application of minimal codewords in secret sharing schemes (see \cite{M1, M2}). In \cite{AB}, the authors give a simple sufficient condition to get minimal codes, known as the Ashikhmin--Barg condition, upon which the investigation and construction of many minimal codes has been based over the past twenty years (see \cite{ding2015class,mesnager2020minimal,mesnager2019several} and references therein). 

More recently, an increasing number of connections with other areas of mathematics have been found, initiating a renewed interest in minimal codes.
For instance, in \cite{SP}, the authors implicitly use minimal codes to tackle the problem of the $j$-wise Davenport constant (a combinatorial invariant related to zero-sum sequences in finite abelian groups) in additive combinatorics. Another recently discovered link concerns the trifference problem (see 
the recently released preprint \cite{Bishnoi}). However, the most important connection unearthed is certainly the one with finite geometry.
A classical procedure in coding theory associates points in a projective space of dimension $k-1$ to a code of dimension $k$. In the case of minimal codes, as observed first in \cite{ABNgeo} and \cite{Tang} it turns out that this set of points in the projective space has the nice property of meeting every projective hyperplane in a subset spanning the hyperplane.
Such objects are called strong blocking sets, combinatorial structures introduced first in \cite{davydov2011linear} in connection with saturating sets. Note that a connection of these objects with minimal codes was already done in \cite{bonini2021minimal}, where they were named cutting blocking sets. 
Let us underline that the length of a minimal code of given dimension is the size of the corresponding strong blocking set.

While it is easy to construct strong blocking sets of large size (or equivalently, long minimal codes of a given dimension), it is not obvious how small they can be.
Therefore, most research focuses on determining bounds on the size of the shortest minimal code of a given dimension and to construct short minimal codes explicitly. The upper bound on the length of the shortest minimal code is a very active area of research. The current best known result has been obtained only recently, independently in \cite{ABN} and \cite{Bishnoi}: noting $m(k, q)$ the size of the shortest minimal code of dimension $k$ over $\mathbb{F}_q$, we have 
$$m(k,q) \leq \frac{2k}{\log_{q}\big(\frac{q^{4}}{q^{3} - q + 1}\big)}\cdot(q+1).$$
This upper bound is derived from non-effective existence results. Some explicit constructions of short minimal codes have been recently presented in \cite{ABN,3CB,BBconcat,Bartoli,Bishnoi,cohen2013minimal} and in the upcoming \cite{Expander}.

\medskip

In this note, we investigate the lower bound on the length of minimal codes.
So far the best known general result, proved in \cite{3CB}, is
$$m(k, q) \geq (q+1)(k-1).$$
In \cite{3CB}, along with the proof, the authors observe that this bound is not always tight.

Inspired by the usage of the MRRW bound in \cite{SP}, here we focus first on the asymptotic case when $q$ is fixed and $k$ is arbitrarily large, proving the following result.

\medskip

\noindent \textbf{Theorem A.} If $k$ is large enough, 
$$m(k, q) \geq \left(q+ \varepsilon(q) \right) \cdot k,$$
where $\varepsilon$ is an increasing function and $1.52 <\varepsilon(2)\leq \varepsilon(q) \leq \sqrt{2} + \frac{1}{2}$.

\medskip

Our method is purely coding theoretical, combining classical asymptotic bounds on codes and a bound on the minimum distance of minimal codes from \cite{3CB}. It does not seem to provide a geometric insight of the problem, that could give, instead, an even stronger result. During the writing process, it was brought to our attention that a similar result was proved independently in \cite{Bishnoi}, using a very similar argument. However, Theorem A is a slightly more explicit version of the results proved in \cite{Bishnoi}.

We then concentrate on the binary case, proving some structural results on minimal codes that meet the lower bound. In several dimensions it is possible to prove that there are no minimal codes satisfying these strong structural properties. Our main result is the following.

\medskip

\noindent \textbf{Theorem B.}  If $k \in \{7, 8, 9, 11\}$ or if $k = 5 \pmod 8$,
$$m(k, 2) \geq 3(k-1) + 1.$$

\medskip

\bigskip

\noindent \textbf{Outline:} Our note is organised as follows.
In Section 2, we introduce minimal codes and strong blocking sets, and give some of their properties that we shall use for the rest of the note.
In Section~3, we focus on the asymptotic case, using the well-known $q$-ary MRRW bound to deduce a new asymptotic lower bound for the length of minimal codes.
Finally, in Section 4, we tackle the binary case, showing some structural results on minimal codes of length $3(k-1)$ and using these to provide several dimensions $k$ where such short minimal codes cannot exist.

\section{Preliminaries}

\subsection{Linear codes} We denote by $\mathbb{F}_{q}$ the field with $q$ elements, where $q$ is a prime power.
A subspace of $\mathbb{F}_{q}^{n}$ of dimension $k$ is called a \emph{(linear) $[n, k]_{q}$-code}.
If $x$ is a vector of $\mathbb{F}_{q}^{n}$ we write $\sigma(x) = \{i \mid x_{i} \neq 0\}$ to denote its \emph{support}.
The space $\mathbb{F}_{q}^{n}$ is endowed with the \emph{Hamming metric} $wt_{H}(x) = |\sigma(x)|$.
The \emph{minimal distance} of a code is the smallest nonzero Hamming weight of a codeword.
Throughout this note we will consider $[n, k, d]$-codes, that is codes of dimension $k$ in $\mathbb{F}_{q}^{n}$ and with minimal distance $d$ (sometimes we will omit $d$ if it is not known).
Two $[n, k, d]_{q}$-codes $\mathcal{C}$ and $\mathcal{C}'$ are said to be equivalent if there is a linear map from $\mathbb{F}_{q}^{n}$ to itself that preserves Hamming weight  and maps $\mathcal{C}$ to $\mathcal{C}'$.

For an $[n, k, d]_{q}$-code, we define its \emph{rate} $R = \frac{k}{n}$ and its \emph{relative minimal distance} $\delta = \frac{d}{n}$.

\begin{definition}
A function $f : [0, 1] \rightarrow [0; 1]$ is \emph{asymptotically $q$-ary upper-bounding} if it is continuous, decreasing, and such that, for large enough length $n$,
$$R \leq f(\delta).$$
\end{definition}

For example, the well known Singleton bound $k+d \leq n+1$ produces the asymptotic upper-bounding function $f(x) = 1 - x$. In \cite{A}, Aaltonen generalises the celebrated asymptotic binary MRRW bound \cite{MRRW} to codes over any finite field. His generalised upper-bounding function for codes over $\mathbb{F}_{q}$ is
$$M(\delta) = H_{q}\left(\frac{1}{q}\left(q-1-(q-2)\delta - 2 \sqrt{(q-1)\delta(1-\delta)}\right)\right),$$
where $H_{q}$ is the $q$-ary entropy function
$$H_{q}(x) = -x\log_{q}\Big(\frac{x}{q-1}\Big) - (1-x)\log_{q}(1-x).$$

\subsection{Minimal codes and strong blocking sets}

Let us introduce the main object of this note.

\begin{definition}
In a code $\mathcal{C}$ a nonzero codeword $c \in \mathcal{C}$ is \emph{minimal} if there are no nonzero codewords $c' \in \mathcal{C}$ with $\sigma(c') \subsetneq \sigma(c)$.
A code $\mathcal{C}$ is \emph{minimal} if all of its nonzero codewords are minimal.
\end{definition}

An interesting property of minimal codes is that every codeword satisfies the following.

\begin{proposition}[{\cite[Proposition 1.5]{3CB}}] \label{prop:singleton}
Let $\mathcal{C}$ be a minimal code of parameters $[n, k, d]$.
Then, for every codeword $c \in \mathcal{C}$,
$$w_{H}(c) \leq n - k + 1$$
\end{proposition}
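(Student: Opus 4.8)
The plan is to prove this by a rank--nullity (dimension-counting) argument, in which minimality forces the space of codewords supported inside $\sigma(c)$ to collapse to a line. First I would dispose of the trivial case $c = 0$: then $w_H(c) = 0 \leq n - k + 1$, since a $k$-dimensional code in $\mathbb{F}_q^n$ requires $n \geq k$. So from now on I assume $c \neq 0$ and set $w = w_H(c) = |\sigma(c)|$.

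The key step is to establish the following claim: every codeword $c' \in \mathcal{C}$ with $\sigma(c') \subseteq \sigma(c)$ is a scalar multiple of $c$. To prove it, I would pick any coordinate $i \in \sigma(c)$, so that $c_i \neq 0$, and put $\lambda = c'_i c_i^{-1}$. The codeword $c' - \lambda c$ then vanishes at the $i$-th coordinate, hence $\sigma(c' - \lambda c) \subseteq \sigma(c) \setminus \{i\} \subsetneq \sigma(c)$. By the minimality of $c$, there is no \emph{nonzero} codeword whose support is strictly contained in $\sigma(c)$; therefore $c' - \lambda c = 0$, i.e. $c' = \lambda c$. This shows that the subspace of codewords supported inside $\sigma(c)$ is exactly the line $\langle c \rangle$, of dimension $1$.

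With the claim in hand, the conclusion is pure bookkeeping. I would consider the projection $p \colon \mathcal{C} \to \mathbb{F}_q^{\,n-w}$ that restricts each codeword to the $n - w$ coordinates lying outside $\sigma(c)$. Its kernel is precisely the set of codewords supported inside $\sigma(c)$, which by the claim equals $\langle c \rangle$ and so has dimension $1$. By rank--nullity, $\dim \operatorname{im}(p) = k - 1$; and since $\operatorname{im}(p)$ is a subspace of $\mathbb{F}_q^{\,n-w}$, we get $k - 1 \leq n - w$, that is $w \leq n - k + 1$, as desired.

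The main (indeed the only) obstacle is the claim in the second paragraph, since it is the unique place where minimality is used; the rest is a standard shortening/projection computation. It is worth stressing why the result is not a mere restatement of the Singleton bound: the Singleton bound $d \leq n - k + 1$ controls only the \emph{minimum} weight, whereas here the identical numerical bound is shown to hold for \emph{every} codeword, bounding the maximum weight from above. This upgrade from minimum to maximum weight is exactly what minimality provides.
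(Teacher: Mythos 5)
Your proof is correct. Note that the paper does not prove this proposition at all: it is imported from the reference [3CB] (Proposition 1.5 there), so there is no in-paper argument to compare against. Your argument --- showing via minimality that any codeword supported inside $\sigma(c)$ is a scalar multiple of $c$, and then applying rank--nullity to the projection onto the coordinates outside $\sigma(c)$ --- is exactly the standard proof of this fact, essentially the one given in the cited reference, and every step (including the strict inclusion $\sigma(c'-\lambda c)\subseteq\sigma(c)\setminus\{i\}\subsetneq\sigma(c)$ that triggers minimality) is sound.
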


Recently, an important connection with finite geometry has been established, opening new scenarios of research.

The projective space of dimension $k-1$ is defined to be
$${\rm PG}(k-1, q) = \Big( \mathbb{F}_{q}^{k} \setminus \{ 0\} \Big) \big/ \sim$$
where $\sim$ is the equivalence relation defined by being colinear.

To a \emph{generator} matrix $G \in \mathcal{M}_{k, n}(\mathbb{F}_{q})$ of rank $k$ we associate the code $\mathcal{C}$ formed by its rowspan.
The code $\mathcal{C}$ is then an $[n, k]_{q}$-code.
The code $\mathcal{C}$ is said to be \emph{nondegenerate} if all columns of $G$ are nonzero.
In that case these columns (seen as vectors of $\mathbb{F}_{q}^{k}$) can be projected onto ${\rm PG}(k-1, q)$, where they form a multiset, i.e. a set with multiplicities.
If there all points have multiplicities one, the code is said to be \emph{projective}.

\begin{definition}
A \emph{strong blocking set} is a subset $B \subseteq {\rm PG}(k-1, q)$ such that for every hyperplane $H$ the intersection of $H$ and $B$ spans $H$. 
\end{definition}

\begin{theorem}[\cite{ABNgeo,Tang}]
A code is minimal if and only if the corresponding points in the projective space form a strong blocking set.
\end{theorem}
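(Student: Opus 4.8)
The plan is to set up the standard dictionary between codewords and hyperplanes and then prove the sharper pointwise statement that a single codeword $uG$ is minimal precisely when the points of $B$ lying on the associated hyperplane span that hyperplane; the theorem then follows by quantifying over all codewords. Throughout I assume the code is nondegenerate, which is exactly the setting in which $B$ is defined; I fix a generator matrix $G$ with columns $g_{1},\dots,g_{n}$ and write $P_{i} \in {\rm PG}(k-1,q)$ for the projective point spanned by $g_{i}$, so that $B = \{P_{1},\dots,P_{n}\}$ (multiplicities are irrelevant, since they do not affect any span).

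First I would record the dictionary. Every nonzero codeword is $c = uG$ for a $u \in \mathbb{F}_{q}^{k}\setminus\{0\}$ that is unique up to scalar, and its $i$-th entry is $u \cdot g_{i}$; since $u$ determines the hyperplane $H_{u} = \{P : u \cdot P = 0\}$ and every hyperplane arises this way, I get $i \notin \sigma(uG) \iff u \cdot g_{i} = 0 \iff P_{i} \in H_{u}$. Hence the complement of $\sigma(uG)$ corresponds exactly to $B \cap H_{u}$, and for two nonzero functionals $u,u'$ one has $\sigma(u'G) \subseteq \sigma(uG) \iff B \cap H_{u} \subseteq B \cap H_{u'}$, with the containment strict on one side iff it is strict on the other. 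Because rescaling $u$ leaves supports and minimality unchanged, it suffices to prove: $uG$ is minimal $\iff B \cap H_{u}$ spans $H_{u}$.

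For the implication ``$B \cap H_{u}$ spans $H_{u}$ $\Rightarrow$ $uG$ minimal'' I would argue by contraposition. If $uG$ is not minimal, there is $u' \neq 0$ with $\sigma(u'G) \subsetneq \sigma(uG)$, i.e. $B \cap H_{u} \subsetneq B \cap H_{u'}$; the strictness produces a point of $B$ lying on $H_{u'}$ but not on $H_{u}$, so $H_{u} \neq H_{u'}$. Then $B \cap H_{u} \subseteq H_{u} \cap H_{u'}$, a subspace of projective dimension $k-3$ strictly inside the $(k-2)$-dimensional $H_{u}$, so $B \cap H_{u}$ cannot span $H_{u}$. For the converse, suppose $B \cap H_{u}$ does not span $H_{u}$, so the linear span $W = \langle B \cap H_{u}\rangle$ has dimension at most $k-2$. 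Since $uG \neq 0$ there is a point $P \in B \setminus H_{u}$; now $W + \langle P\rangle$ has dimension at most $k-1$ and therefore lies in some hyperplane $H_{u'}$. By construction $B \cap H_{u} \subseteq H_{u'}$ while $P \in H_{u'}\setminus H_{u}$, so $B \cap H_{u} \subsetneq B \cap H_{u'}$, which yields a nonzero codeword $u'G$ with $\sigma(u'G) \subsetneq \sigma(uG)$; thus $uG$ is not minimal.

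Finally, quantifying over all nonzero $u$ converts the pointwise equivalence into the theorem: $\mathcal{C}$ is minimal iff $B \cap H$ spans $H$ for every hyperplane $H$, i.e. iff $B$ is a strong blocking set. I expect the only delicate point to be the converse direction, where one must actually exhibit a strictly smaller support: the naive move of taking a hyperplane $H_{u'}$ that witnesses non-spanning directly can give an \emph{equality} $B \cap H_{u} = B \cap H_{u'}$, and the fix is precisely the dimension count $\dim W \leq k-2$, which leaves room to enlarge $W$ by a point off $H_{u}$ while still fitting inside a hyperplane.
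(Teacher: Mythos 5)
The paper does not prove this theorem: it is quoted from \cite{ABNgeo,Tang} and used as a black box, so there is no internal proof to compare yours against. On its own merits, your proof is correct and self-contained, and it is the standard argument from the cited sources: the dictionary $i \notin \sigma(uG) \iff P_{i} \in H_{u}$ reduces everything to the pointwise claim that $uG$ is minimal iff $B \cap H_{u}$ spans $H_{u}$, and both implications are sound. In the first direction, the observation that a strict containment of supports forces $B \cap H_{u}$ into the codimension-$2$ subspace $H_{u} \cap H_{u'}$ is exactly right. In the second, you correctly identify and repair the one genuine subtlety: a hyperplane merely containing $\langle B \cap H_{u} \rangle$ may meet $B$ in the same set as $H_{u}$ does, giving \emph{equal} supports, which is compatible with minimality; enlarging the span by a point $P \in B \setminus H_{u}$ (which exists because $uG \neq 0$) is what forces the strict containment needed to contradict minimality. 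One cosmetic slip: since $G$ has rank $k$, the vector $u$ with $c = uG$ is unique, not merely unique up to scalar; what depends on $u$ only up to a scalar are $H_{u}$ and $\sigma(uG)$, which is all you actually use, so nothing breaks. Your handling of multiplicities (they affect neither spans nor the transfer of strictness between index sets and point sets) is also fine.
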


\subsection{Lower bound} By adding random columns to a generator matrix of a minimal code we get a minimal code. It is then interesting to know how short a minimal code can be.

\begin{definition}
The length of the shortest minimal codes over $\mathbb{F}_{q}$ of dimension $k$ is denoted $m(k, q)$.
\end{definition}

\begin{remark}
Since the length of a projective minimal code of dimension $k$ over $\mathbb{F}_{q}$ is also the size of the corresponding strong blocking set in ${\rm PG}(k-1, q)$, the function $m(k, q)$ also designates the size of the smallest strong blocking set in ${\rm PG}(k-1, q)$.
\end{remark}

\begin{theorem}[{\cite[Theorem 2.8]{3CB}}]\label{thm:minimumdistance}
Let $\mathcal{C}$ be a minimal code with parameters $[n, k, d]_{q}$. Then \[d \geq (q-1)(k-1) + 1.\]
\end{theorem}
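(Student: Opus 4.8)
The plan is to pass to the geometric side and argue about the richest hyperplane. By the characterization of minimal codes as strong blocking sets, I may assume $\mathcal{C}$ is nondegenerate and identify it with the multiset $B\subseteq \mathrm{PG}(k-1,q)$ of its columns, which forms a strong blocking set. A nonzero codeword is, up to scalar, a linear functional, i.e.\ a hyperplane $H=\ker u$, and its weight equals the number of columns lying off $H$; hence $d=\min_H |B\setminus H|$, the minimum being attained at the \emph{richest} hyperplane $H_0$ (the one containing the most points of $B$). Since duplicating a column only increases every weight, I would first reduce to the projective case, so that $B$ is an honest set; it then suffices to show that at least $(q-1)(k-1)+1$ points of $B$ lie off $H_0$.

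To extract usable information I would exploit that $B\cap H_0$ spans $H_0$, choosing $k-1$ points $P_1,\dots,P_{k-1}\in B\cap H_0$ forming a basis of $H_0$. Normalising coordinates so that $H_0=\{x_k=0\}$, $P_i=e_i$, and every off-point is written $(a,1)$, I consider, for each $i$, the pencil of the $q+1$ hyperplanes through the $(k-3)$-flat $\langle e_j:j\neq i\rangle$. All but $H_0$ of them must, by the strong blocking property, contain a point of $B$ off $H_0$, and a short computation shows this forces the $i$-th coordinates of the off-points to take every value of $\mathbb{F}_q$. This ``coordinate covering'' is the engine of the argument and already yields $d\ge q$.

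The main obstacle is to upgrade these $k-1$ separate covering conditions into the bound $(q-1)(k-1)+1$, which is linear in $k$. This does not follow from the pencils alone: the covering conditions are satisfied by as few as $q$ points (for instance a diagonal set), so off-points can be reused across coordinate directions, and even exploiting two basis points simultaneously only recovers $d\ge q$. The genuine constraints come from the generic hyperplanes, whose defining functionals have full support: there the strong blocking condition demands that the off-points lying on the hyperplane \emph{affinely span} an affine subspace, not merely be nonempty. This suggests reformulating the target as a purely affine statement — that a set $A\subseteq \mathbb{F}_q^{k-1}$ meeting every affine hyperplane in an affinely spanning subset has size at least $(q-1)(k-1)+1$ — and attacking it by induction on $k$.

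For the induction I would project $B$ from a point, using the fact that projection sends a strong blocking set of $\mathrm{PG}(k-1,q)$ to one of $\mathrm{PG}(k-2,q)$, to obtain a lower-dimensional instance to which the inductive bound $(q-1)(k-2)+1$ applies; the base case $k=2$ is immediate, since there $B$ must be all of $\mathrm{PG}(1,q)$ and $d=q$. The crux — and the step I expect to be hardest — is to show that descending one dimension costs at least $q-1$ off-points, i.e.\ that $d\ge d'+(q-1)$ where $d'$ is the distance of the projected code. Projecting from a single point does not automatically give this gain, since the loss is controlled by the collinearities of the off-points with the centre of projection and these can vanish; so the argument must either choose the centre of projection carefully or, more likely, track the full-support spanning conditions directly to guarantee that $q-1$ new off-points are unavoidable at each step.
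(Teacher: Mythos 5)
Your reduction to an affine statement is exactly the right move, and it is in fact how the cited proof of this theorem goes: for \emph{any} hyperplane $H$ (no need to single out the richest one), the strong blocking property forces $B \cap H'$ to span $H'$ for every hyperplane $H' \neq H$, so $B \cap H'$ cannot be contained in the proper subspace $H \cap H'$, and hence $B \setminus H$ meets every $H' \neq H$. Identifying $\mathrm{PG}(k-1,q) \setminus H$ with $\mathrm{AG}(k-1,q)$, this says precisely that $B \setminus H$ is an affine blocking set, and the weight of the codeword attached to $H$ equals $|B \setminus H|$. Note, however, that your intermediate claim is off: the strong blocking condition does \emph{not} force the off-points lying on $H'$ to affinely span anything, since the points of $B \cap H' \cap H$ may already span $H' \cap H$; all that follows, and all that is needed, is nonemptiness, i.e.\ the affine blocking property.

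The genuine gap is the final quantitative step. The statement you reduce to --- an affine blocking set in $\mathrm{AG}(k-1,q)$ has at least $(q-1)(k-1)+1$ points --- is exactly Jamison's theorem (Theorem \ref{thm:Jamison} in this paper, due independently to Brouwer and Schrijver), which is invoked as a black box both in the proof of the present statement in the cited reference and in this paper's proof of Theorem \ref{thm:lowerbound}. You propose instead to prove it by induction on dimension via projection, and you yourself flag that the inductive step (``descending one dimension costs at least $q-1$ points'') is unproven. This is not a fixable detail: projecting an affine blocking set from a point gives no control on how many points are lost, parallel-class or incidence counting only yields bounds of order $q$ rather than $(q-1)(k-1)$, and all known proofs of Jamison's theorem are algebraic (Jamison's original argument, the Brouwer--Schrijver polynomial proof, or Alon's Combinatorial Nullstellensatz). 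So as written your argument establishes only $d \geq q$; the heart of the theorem --- the affine blocking set bound --- is missing, and no elementary projection argument is known to supply it.
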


\begin{theorem}[{\cite[Theorem 2.14]{3CB}}]\label{thm:lowerbound}
The length of a minimal codes of dimension $k$ over $\mathbb{F}_q$ is at least $(q+1)(k-1)$, that is
$$m(k, q) \geq (q+1)(k-1).$$
\end{theorem}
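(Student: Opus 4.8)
The plan is to pass to the geometric picture. By the correspondence between minimal codes and strong blocking sets, a minimal code of length $n$ and dimension $k$ is the same as a strong blocking set of size $n$ in ${\rm PG}(k-1,q)$, and since deleting repeated points of a multiset does not affect the spanning property, a shortest minimal code is projective; hence it is enough to prove that every strong blocking set $B\subseteq {\rm PG}(k-1,q)$ satisfies $|B|\ge (q+1)(k-1)$. I would first record that $B$ necessarily spans ${\rm PG}(k-1,q)$: otherwise $B$ would lie in a hyperplane $H_0$, and any other hyperplane $H_1$ would meet $B$ inside the codimension-two space $H_0\cap H_1$, which it could not span.

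As a warm-up, and to locate the difficulty, I would try a double count of the incidences between $B$ and the hyperplanes. Writing $x_H=|B\cap H|$, the spanning condition forces $x_H\ge k-1$ for every hyperplane $H$, while $\sum_H x_H = |B|\cdot\frac{q^{k-1}-1}{q-1}$ over the $\frac{q^k-1}{q-1}$ hyperplanes. This already gives $|B|\ge (k-1)\frac{q^k-1}{q^{k-1}-1}$, which is only slightly above $q(k-1)$. Even feeding in Theorem~\ref{thm:minimumdistance}, which via $|B\setminus H|\ge (q-1)(k-1)+1$ also caps $x_H$ from above, and combining the first and second moments of the $x_H$, the resulting inequality degenerates at leading order into a perfect square that is tight at $|B|\approx qk$; so the bare cardinalities of the hyperplane sections cannot by themselves reach $(q+1)(k-1)$. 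The missing ingredient is that $B\cap H$ must genuinely span $H$, not merely be large.

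The approach I would therefore pursue is a dimension recursion, namely $m(k,q)\ge m(k-1,q)+(q+1)$, which telescopes from the base case $m(2,q)=q+1$ (a strong blocking set on a projective line must be the whole line) to the desired $(q+1)(k-1)$. To obtain the recursion I would project from a point $P\in B$: the projection $\pi_P\colon {\rm PG}(k-1,q)\setminus\{P\}\to {\rm PG}(k-2,q)$ sends $B\setminus\{P\}$ onto a strong blocking set of ${\rm PG}(k-2,q)$. Indeed every hyperplane of ${\rm PG}(k-2,q)$ is the image of a hyperplane $H\ni P$, and since $B\cap H$ spans $H$ and contains $P$, its image spans the quotient; hence $|\pi_P(B\setminus\{P\})|\ge m(k-1,q)$. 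It then remains to compare $|B|$ with the size of this image.

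The main obstacle is precisely this comparison. One has $|B|=|\pi_P(B\setminus\{P\})|+1+e(P)$, where $e(P)$ is the number of points of $B$ that become redundant under $\pi_P$, i.e.\ the excess of points lying on lines through $P$ beyond one per line, and the recursion needs a center $P$ with $e(P)\ge q$. A point with no such excess is one through which $B$ behaves like an arc, and this is exactly where the full force of the spanning hypothesis must enter: a strong blocking set has no tangent or external hyperplane and so cannot be a cap, which is what forces rich lines to exist and to cluster. To produce a suitable $P$ I would exploit the pencil of $q+1$ hyperplanes through a codimension-two subspace spanned by points of $B$, each of which must contain a further point of $B$ off that subspace, yielding $q+1$ distinct points; quantifying this and averaging it over the points of $B$ so as to guarantee a center with $e(P)\ge q$ is the crux of the argument, and the step I expect to be hardest.
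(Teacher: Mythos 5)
Your reduction to strong blocking sets, the observation that projection $\pi_P$ from a point $P \in B$ maps $B \setminus \{P\}$ onto a strong blocking set of ${\rm PG}(k-2,q)$, and the base case $m(2,q)=q+1$ are all correct, and the recursion $m(k,q) \geq m(k-1,q)+(q+1)$ would indeed telescope to the claimed bound. But the proof has a genuine gap exactly where you flag it: you never establish that some $P \in B$ has excess $e(P) \geq q$, and the sketch you give does not approach this. The pencil argument you outline produces $q+1$ points of $B$, one in each hyperplane through a codimension-two subspace, but those points need not lie on lines through any common center, so they contribute nothing to any single $e(P)$; excess is a statement about collinearity with a fixed point, while the spanning hypothesis hands you points spread across a pencil, which is different information. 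Note also that $e(P)\geq q$ would have to hold with zero slack for every strong blocking set in every dimension, since the bound $(q+1)(k-1)$ is attained (e.g.\ for $q=2$ and $k\in\{2,4,6\}$), so any averaging argument that loses a constant cannot succeed; and ruling out caps is far weaker than what you need, since you must exclude every configuration in which each point lies on few rich secants (for $q=2$, for instance, you would have to show that every strong blocking set has a point lying on two $3$-point secants).

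The paper's proof shows what your missing lemma is really worth. It chooses a hyperplane $H$ with $|S \cap H|$ maximal, notes that $S \setminus H$ is an affine blocking set of ${\rm AG}(k-1,q)$, and invokes Jamison's theorem $|S \setminus H| \geq (q-1)(k-1)+1$, a nontrivial result proved by the polynomial method; it then takes a minimal affine blocking set $S_H' \subseteq S \setminus H$, uses a hyperplane $U$ meeting $S_H'$ in a single point, and runs a counting argument over the pencil of hyperplanes through $H \cap U$ (exploiting the maximality of $H$) to get $|S \setminus U| \geq q(k-1)$, concluding with $|S \cap U| \geq k-1$. In other words, the spot your proposal labels ``the crux'' is precisely where the known argument spends a deep external theorem plus a maximality trick; the missing step is not a routine quantification, and as it stands the proposal does not prove the theorem.
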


In Section $4$, we will frequently refer to the following geometric proof of Theorem \ref{thm:lowerbound}, which was given in \cite{HN}, that we report here for the reader's convenience. Before giving the proof, let us introduce some further objects.

\begin{definition}
A set $B \subseteq {\rm AG}(k-1, q)$ is called an \emph{affine blocking set} if it intersects every affine hyperplane of ${\rm AG}(k-1, q)$.
\end{definition}

\begin{theorem}[\cite{jamison1977covering}]\label{thm:Jamison}
Let $B \subseteq {\rm AG}(k-1, q)$ be an affine blocking set. Then $|B| \geq (q-1)(k-1) + 1$.
\end{theorem}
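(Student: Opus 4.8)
The plan is to prove this by the polynomial method, which is the standard route to Jamison's theorem (going back to the argument of Brouwer and Schrijver). Writing $n = k-1$, I identify ${\rm AG}(n,q)$ with $\mathbb{F}_q^{n}$, so that affine hyperplanes are exactly the solution sets of equations $\langle a, x\rangle = c$ with $a \in \mathbb{F}_q^{n}\setminus\{0\}$ and $c \in \mathbb{F}_q$. The first move is a normalisation: since translations are affine automorphisms preserving the blocking property, I may translate a point of $B$ to the origin, so that $\mathbf 0 \in B$. The crucial observation is then that every affine hyperplane avoiding the origin can be written uniquely as $\{x : \langle a, x\rangle = 1\}$ for some $a \neq 0$, and that the origin itself cannot block any of these (since $\langle a, \mathbf 0\rangle = 0 \neq 1$). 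Hence the $N-1$ remaining points of $B$, say $b_1,\dots,b_{N-1}$ with $N = |B|$, must already block all off-origin hyperplanes: for every $a \neq 0$ there is an index $i$ with $\langle a, b_i\rangle = 1$.

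With this in hand, I would introduce the polynomial
$$G(x) = \prod_{i=1}^{N-1}\bigl(1 - \langle x, b_i\rangle\bigr) \in \mathbb{F}_q[x_1,\dots,x_n],$$
of degree at most $N-1$. By the previous paragraph $G$ vanishes at every nonzero $a$ (one factor dies), while $G(\mathbf 0) = 1$; thus, as a function on $\mathbb{F}_q^{n}$, $G$ is the indicator of the origin. Next I would reduce $G$ modulo the relations $x_j^{q} = x_j$ to obtain a representative $\bar G$ in which every variable has degree at most $q-1$ and which induces the same function; crucially this reduction never increases the total degree, so $\deg \bar G \leq N-1$.

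The final step exploits the uniqueness of reduced representatives: the functions $\mathbb{F}_q^{n} \to \mathbb{F}_q$ are in bijection with polynomials of degree at most $q-1$ in each variable, so the indicator of the origin has a single such representative, which one checks directly to be $\prod_{j=1}^{n}(1 - x_j^{q-1})$, of degree exactly $n(q-1)$. Therefore $\bar G = \prod_{j=1}^{n}(1 - x_j^{q-1})$ and
$$n(q-1) = \deg \bar G \leq N - 1,$$
which rearranges to $|B| = N \geq (q-1)n + 1 = (q-1)(k-1) + 1$, as claimed.

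I expect the main obstacle to be pinning down the exact additive constant $+1$, rather than merely $(q-1)(k-1)$: the naive polynomial built from all of $B$ yields only the weaker bound, and it is precisely the trick of sacrificing one point to the origin — so that the remaining $N-1$ points are forced to block every hyperplane missing the origin — that recovers the sharp estimate. The remaining care goes into the two standard but essential facts about the reduction $x_j^{q}\mapsto x_j$: that it preserves the induced function while not increasing total degree, and that reduced representatives are unique, so that $\bar G$ must literally coincide with the explicit indicator polynomial.
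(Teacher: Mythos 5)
Your proof is correct and complete, but note that the paper itself contains no proof of this statement: Theorem~\ref{thm:Jamison} is quoted directly from Jamison's 1977 paper and used as a black box, its only role being as input to the geometric proof of the bound $m(k,q) \geq (q+1)(k-1)$. So there is no in-paper argument to compare against; what you have reconstructed is the classical Brouwer--Schrijver polynomial-method proof, which is the standard modern route to this result (Jamison's original argument was different and considerably more involved). Every step of yours checks out: the translation putting $\mathbf{0} \in B$, the unique representation of hyperplanes avoiding the origin as $\{x : \langle a,x\rangle = 1\}$ with $a \neq 0$, the fact that the remaining $N-1$ points must block all such hyperplanes so that $G(x) = \prod_{i=1}^{N-1}\bigl(1 - \langle x, b_i\rangle\bigr)$ vanishes on $\mathbb{F}_q^{n}\setminus\{\mathbf{0}\}$ while $G(\mathbf{0})=1$, the degree-non-increasing reduction modulo $x_j^{q} - x_j$, and the uniqueness of reduced representatives (both the space of reduced polynomials and the space of functions $\mathbb{F}_q^{n}\to\mathbb{F}_q$ have dimension $q^{n}$, and evaluation is surjective by Lagrange-type interpolation, hence bijective), forcing $\bar G = \prod_{j=1}^{n}(1 - x_j^{q-1})$ and thus $n(q-1) \leq N-1$. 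Your closing remark identifies exactly the key trick: sacrificing one point to the origin is what upgrades the bound to the sharp constant $+1$. The one pedantic point you leave implicit is that $B$ is nonempty (immediate, since otherwise some hyperplane is unblocked), which is needed before translating a point to the origin.
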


\begin{proof} [Proof of Theorem \ref{thm:lowerbound}]
Let $S$ be a strong blocking set of $PG(k-1, q)$.
Let $H$ be a hyperplane of ${\rm PG}(k-1, q)$ whose intersection with $S$ is maximal.
Write $S_{H} = S\setminus H$.
Since $S$ is a strong blocking set, $S_{H}$ is an affine blocking set.
Let $S_{H}'$ be a minimal affine blocking set (with respect to set inclusion) contained in $S_H$.
By Theorem \ref{thm:Jamison}, $|S_{H}'| \geq 1 + (k-1)(q-1)$.
By the minimality of $S_H'$, For every $P \in S_{H}'$ there is a hyperplane $U$ such that $U \cap S_{H}' = \{P\}$.
For any hyperplane $Z$ containing $H \cap U$ but different from $H$ and $U$ we have:
\begin{align*}
|H \cap S| \geq |Z \cap S| &\geq |Z \cap S_{H}| + |Z \cap S \cap H| = |Z\cap (S_{H} \setminus U)| + |U \cap (S \cap H)| \\
&= |Z \cap (S_{H} \setminus U)| + |S \cap H| + |(S \setminus H)\setminus U| - |S\setminus U| \\
&\geq (k-1) + |S\cap H| + (k-1)(q-1) - |S \setminus U|
\end{align*}
so that $|S \setminus U| \geq q(k-1)$. Since $|S \cap U| \geq (k-1)$, we get $|S| = |S \cap U| + |S \setminus U| \geq (k-1)(q+1)$.
\end{proof}

\section{The asymptotic lower bound}

In this section we prove an asymptotic improvement of the lower bound, using the $q$-ary MRRW bound from the last section and Theorem \ref{thm:minimumdistance}. Our proof generalises a result already known for $q=2$ (in \cite{CL} the authors credit Komlos with an unpublished proof but it seems that the first published version is \cite{KS}). More recently, a similar approach was used in \cite{SP} in the context of additive combinatorics.

\medskip

First, notice that Theorem \ref{thm:minimumdistance} implies that $g(x) = \frac{x}{q-1}$ is an asymptotic upper-bounding function for minimal codes over $\mathbb{F}_{q}$ (if we forget the condition that $g$ has to be decreasing).
Since $g$ is increasing, and since any asymptotic upper-bounding function for general $q$-ary codes is decreasing, it is possible to deduce an upper bound on the rate of minimal codes from any asymptotic upper-bounding function $f$, by simply computing the intersection between $f$ and $g$.

This method is shown in the following graph for $q=2$, using the MRRW bound as an upper-bounding function. The area in grey shows the region where minimal codes with large length (since we are considering asymptotic upper-bounding functions) may exist. The dotted line shows the maximum rate for a minimal code obtained with this method, which corresponds to a lower bound on the length when the dimension is fixed.

\begin{center}
\includegraphics[scale = 0.4]{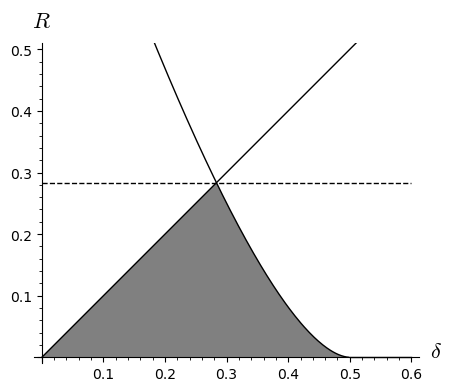}
\end{center}

According to the upper-bounding function used, we get different bounds. For example, the asymptotic Plotkin bound can be used to recover an asymptotic version of Theorem \ref{thm:lowerbound}.

\begin{lemma}\label{lem:plotkin}
For $k$ large enough we have
$$\frac{m(k, q)}{k} \geq q+1$$
\end{lemma}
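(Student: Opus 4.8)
The plan is to use the asymptotic Plotkin bound as the upper-bounding function $f$ and intersect it with the minimal-code constraint $g(x) = \frac{x}{q-1}$ coming from Theorem~\ref{thm:minimumdistance}, exactly as described in the paragraph preceding the lemma. First I would recall the asymptotic Plotkin bound: for $q$-ary codes, $R \leq f(\delta) = 1 - \frac{q}{q-1}\delta$ for $0 \leq \delta \leq \frac{q-1}{q}$ (and $R \to 0$ beyond that point). This is an asymptotically $q$-ary upper-bounding function in the sense of the definition given above, so any family of codes with rate $R$ and relative distance $\delta$ must satisfy it for large enough length.

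Next I would combine the two constraints. A minimal code of parameters $[n,k,d]_q$ has, by Theorem~\ref{thm:minimumdistance}, minimum distance $d \geq (q-1)(k-1)+1$, hence relative distance $\delta = \frac{d}{n} \geq \frac{(q-1)(k-1)+1}{n}$. Writing $R = \frac{k}{n}$, this says essentially $\delta \geq (q-1)R$ up to the vanishing $o(1)$ terms as $k,n \to \infty$; equivalently $R \leq \frac{\delta}{q-1} = g(\delta)$. So the rate of any long minimal code lies below both $f$ and $g$, i.e. below the intersection point of the line $R = 1 - \frac{q}{q-1}\delta$ and the line $R = \frac{\delta}{q-1}$.

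I would then solve for this intersection. Setting $\frac{\delta}{q-1} = 1 - \frac{q}{q-1}\delta$ gives $\delta + q\delta = q-1$, so $\delta = \frac{q-1}{q+1}$, and the common value of the rate is $R = \frac{\delta}{q-1} = \frac{1}{q+1}$. Therefore, for large enough $k$, any minimal code satisfies $R \leq \frac{1}{q+1} + o(1)$, which is to say $\frac{n}{k} = \frac{1}{R} \geq q+1 - o(1)$. Taking the infimum over minimal codes of dimension $k$ yields $\frac{m(k,q)}{k} \geq q+1$ for $k$ large, as claimed.

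The only delicate point is the handling of the lower-order terms: the bound $d \geq (q-1)(k-1)+1$ and the Plotkin bound each hold only asymptotically (the latter by definition, the former exactly but with the $-1$ shift), so strictly speaking the intersection argument gives $\frac{m(k,q)}{k} \geq q+1 - o(1)$ rather than a clean inequality at finite $k$. I expect this bookkeeping — making precise that the $o(1)$ error does not prevent the stated inequality once we allow $k$ to be large enough — to be the main (though minor) obstacle; it is resolved by noting that the strict inequality $R < \frac{1}{q+1}$ is what the intersection of a decreasing $f$ with the increasing $g$ actually forces for any point strictly below both, so the limit is approached from the correct side and the weak inequality $\frac{m(k,q)}{k} \geq q+1$ survives in the limit.
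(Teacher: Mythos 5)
Your proposal is correct and takes essentially the same approach as the paper: the paper's own two-line proof also intersects the asymptotic Plotkin bound $R \leq 1 - \frac{q}{q-1}\delta$ with the minimal-code constraint $R \leq \frac{\delta}{q-1}$ coming from Theorem~\ref{thm:minimumdistance}, solving to get $\delta = \frac{q-1}{q+1}$ and $R_{\max} = \frac{1}{q+1}$. The only difference is that you spell out the $o(1)$ bookkeeping (the $k-1$ versus $k$ shift and the finite-length error in Plotkin), which the paper's terse proof silently absorbs into its ``for $k$ large enough'' framing.
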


\begin{proof}[Proof]
The asymptotic Plotkin bound states that $R(\delta) = 1 - \frac{q}{q-1}\cdot \delta$ is an asymptotic upper-bounding function. $R(\delta) = \frac{\delta}{q-1}$ implies $\delta = \frac{q-1}{q+1}$, that gives $R_{\max} = R(\delta) = \frac{1}{q+1}$.
\end{proof}

Our aim is to replace the Plotkin bound with Aaltonen's stronger $q$-ary MRRW bound, thus sharpening the asymptotic bound on strong blocking sets.

Some numeric evaluations of the intersection between $R$ and $g$ give the following lower bounds.

\begin{remark} \label{rem:table}
The following table gives lower bounds for the value of $\liminf_{k\rightarrow \infty} \frac{m(k, q)}{k}$ :

\begin{center}
\begin{table}[h!]
\begin{tabular}{c|c}
$q$ & $\liminf_{k \rightarrow \infty} \frac{m(k, q)}{k}$ \\ \hline
2   & 3.5276                                            \\
3   & 4.5516                                            \\
4   & 5.568                                             \\
5   & 6.5805                                            \\
7   & 8.5987                                            \\
8   & 9.6057                                           
\end{tabular}
\end{table}
\end{center}
\end{remark}
These lower bounds are all better than $q+1$, which would come from Theorem \ref{thm:lowerbound} or Lemma~\ref{lem:plotkin}. We aim to give an explicit statement of the asymptotic bound one can get with this method.

\begin{theorem} \label{thm:asymptoticbound}
For $k$ large enough we have
$$m(k, q) \geq (q+\varepsilon(q))\cdot k,$$
where $\varepsilon$ is an increasing function and verifies
$$1.5204 \leq \varepsilon(q) \leq \sqrt{2} + \frac{1}{2}.$$
\end{theorem}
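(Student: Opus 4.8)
The plan is to intersect two upper-bounding functions for the rate of a minimal code. By Theorem~\ref{thm:minimumdistance} a minimal $[n,k,d]_q$-code satisfies $d \ge (q-1)(k-1)+1$, so asymptotically $\delta \ge (q-1)R$, i.e.\ $R \le g(\delta) = \frac{\delta}{q-1}$; and by Aaltonen's bound every $q$-ary code satisfies $R \le M(\delta)$ for large length. Since $g$ is strictly increasing with $g(0)=0$ and $M$ is strictly decreasing with $M(0)=1$, the two curves meet at a unique point $\delta_0=\delta_0(q)$, and any minimal code has rate at most $R_{\max}(q) = g(\delta_0)=M(\delta_0)$. Hence $\liminf_{k\to\infty} m(k,q)/k \ge 1/R_{\max}(q)$, and I would \emph{define}
\[
\varepsilon(q) = \frac{1}{R_{\max}(q)} - q = \frac{q-1}{\delta_0(q)} - q = \frac{(q-1)-q\,\delta_0(q)}{\delta_0(q)},
\]
so that $m(k,q) \ge (q+\varepsilon(q))k$ for $k$ large enough (shrinking $\varepsilon(q)$ by an arbitrarily small amount if one wants a strict inequality for every large $k$ rather than only in the liminf). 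It then remains to prove three analytic facts: $\varepsilon$ is increasing, $\varepsilon(q)\le \sqrt2+\tfrac12$, and $\varepsilon(q)\ge 1.5204$.

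For the upper bound I would analyse the regime $q\to\infty$. The numerics suggest $\delta_0(q)\to 1$, so I would substitute $\delta_0 = 1-t/q$ with $t=t(q)=q(1-\delta_0)$ bounded. A direct expansion gives $q-1-(q-2)\delta_0 = 1+(q-2)\tfrac{t}{q}\to 1+t$ and $2\sqrt{(q-1)\delta_0(1-\delta_0)}\to 2\sqrt t$, so the argument of $H_q$ becomes
\[
u = \frac1q\Big(q-1-(q-2)\delta_0 - 2\sqrt{(q-1)\delta_0(1-\delta_0)}\Big) = \frac{(\sqrt t-1)^2+o(1)}{q}.
\]
Since $H_q(s/q)=2\tfrac{s}{q}(1+o(1))$ as $q\to\infty$ (the term $-u\log_q\frac{u}{q-1}$ dominates, and $\log_q\frac{u}{q-1}\to -2$), the equation $M(\delta_0)=g(\delta_0)$, i.e.\ $H_q(u)=\frac{1-t/q}{q-1}$, reduces in the limit to $2(\sqrt t-1)^2=1$. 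Taking the root with $\sqrt t>1$ gives $t\to \tfrac32+\sqrt2$, whence the exact reformulation $\varepsilon(q)=\frac{t-1}{1-t/q}\to \sqrt2+\tfrac12$. Combined with monotonicity this yields $\varepsilon(q)<\sqrt2+\tfrac12$ for every finite $q$.

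The monotonicity of $\varepsilon$ is where I expect the real work to lie. I would extend $H_q$, $M$, and hence $\delta_0$, to a smooth function of a real parameter $q>1$ and show $\tfrac{d}{dq}\varepsilon(q)>0$ by implicit differentiation of the defining identity $H_q\big(u(\delta_0,q)\big)=\frac{\delta_0}{q-1}$, controlling the partial derivatives of the $q$-ary entropy in $q$. The obstacle is that $H_q$ depends on $q$ both through its argument and through the base of the logarithm, so the sign of $\varepsilon'(q)$ is not transparent and will require careful estimates, or a comparison argument showing that the whole bounding configuration scales monotonically in $q$. Granting monotonicity, the lower bound is the easy end: it suffices to bound $\varepsilon(2)$ from below, which amounts to solving the single transcendental equation $\delta_0 = M(\delta_0)$ for the binary MRRW function with rigorous error control (e.g.\ a sign check of $g-M$ at two nearby rational points bracketing $\delta_0$), and verifying that the resulting $1/R_{\max}(2)-2$ exceeds $1.5204$; the small margin below the numerical value $1.5276$ absorbs both the bracketing error and the liminf slack, and monotonicity then propagates the bound to all $q$.
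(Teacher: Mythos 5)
Your overall architecture---intersecting Aaltonen's $q$-ary MRRW bound with the line $R = \delta/(q-1)$ coming from Theorem~\ref{thm:minimumdistance}, converting the rate bound into a length bound, and computing the $q\to\infty$ limit via the substitution $\delta_0 = 1-t/q$ (your computation $2(\sqrt{t}-1)^2=1$, hence $t \to \tfrac32+\sqrt2$, is correct and matches the paper's claim that $\lim_{q}\varepsilon(q)=\sqrt2+\tfrac12$)---is the same as the paper's. But there is a genuine gap at exactly the point you flag as ``where the real work lies'': the monotonicity of $\varepsilon$. You define $\varepsilon(q)$ via the \emph{exact} intersection point $\delta_0(q)$ of $M$ and $g$, and then propose to prove $\varepsilon'(q)>0$ by implicit differentiation of $H_q\bigl(u(\delta_0,q)\bigr)=\delta_0/(q-1)$, conceding that the sign is ``not transparent.'' This is not a deferrable technicality: all three assertions of the theorem (that $\varepsilon$ is increasing, that $\varepsilon(q)\le\sqrt2+\tfrac12$ for finite $q$, and that the numerical bound at $q=2$ propagates to all $q$) rest on it. Worse, for your choice of definition this is precisely what the paper could \emph{not} prove: the remark following the theorem states that for the exact-intersection values (those of Remark~\ref{rem:table}, e.g.\ $1.5276$ at $q=2$) the authors' monotonicity argument becomes invalid, and they only ``strongly believe'' those values are increasing.

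The paper circumvents the difficulty by not using the exact intersection. It bounds the entropy by the elementary estimate $H_q(x) \le x\log_q\bigl(\tfrac{e(q-1)}{x}\bigr)$ (absorbing the $-(1-x)\log_q(1-x)$ term), which turns the condition $(q+\varepsilon)H_q(A(q))\le 1$ into the explicit inequality
\begin{equation*}
\frac{q-1}{q}\, C(q)\, \log_q\Bigl(\frac{e}{C(q)}\, q\,(q+\varepsilon(q))\Bigr) \le 1, \qquad C(q) = \varepsilon(q) + 2 - 2\sqrt{\varepsilon(q)+1},
\end{equation*}
and then \emph{defines} $\varepsilon(q)$ as the root of the corresponding equality. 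This sacrifices optimality ($1.5204$ instead of $1.5276$ at $q=2$) but makes monotonicity provable by a one-variable argument: freezing $C_\ell$ and $\varepsilon(\ell)$, one checks that $f(x) = \frac{x-1}{x}\, C_\ell\, \ln\bigl(\frac{e}{C_\ell}x(x+\varepsilon(\ell))\bigr)/\ln(x)$ is decreasing, so $f(q)<f(\ell)=1$ for $q>\ell$, forcing $\varepsilon(q)>\varepsilon(\ell)$. To repair your proof, either adopt this surrogate definition (your asymptotic and numerical computations then go through essentially unchanged), or supply an actual proof that the exact intersection values increase in $q$---which would be a genuine improvement on the paper rather than a routine estimate.
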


\begin{proof}[Proof]
Recall the $q$-ary MRRW bound
$$M(\delta) = H_{q}\left(\frac{1}{q}\left(q-1-(q-2)\delta - 2 \sqrt{(q-1)\delta(1-\delta)}\right)\right).$$
We need to check that the intersection between $M$ and the upper bound on the rate of minimal codes obtained from Theorem \ref{thm:minimumdistance}, i.e. $R = \frac{\delta}{q-1}$, is smaller than $(q + \varepsilon(q))^{-1}$.
Since $M$ is decreasing, and since for $\delta = \delta_{c}(q) = \frac{q-1}{q+\varepsilon(q)}$ the linear upper bound on the rate is $R \leq \frac{\delta_{c}(q)}{q-1} = \frac{1}{q+\varepsilon(q)}$, it is enough to check that $M(\delta_{c}(q)) \leq \frac{1}{q+\varepsilon(q)}$, or equivalently, that
$$M\Big(\frac{q-1}{q+\varepsilon(q)}\Big) (q+ \varepsilon(q)) \leq 1.$$
First we write
$$A(q) = \frac{1}{q}\left(q-1-(q-2)\delta_{c}(q) - 2 \sqrt{(q-1)\delta_{c}(q)(1-\delta_{c}(q))}\right)
= \frac{q-1}{q(q+\varepsilon(q))}\cdot C(q),$$
where $C(q) = \varepsilon(q) + 2 - 2\sqrt{\varepsilon(q) + 1}$.\\
Computing $(q+\varepsilon(q))H_{q}(A(q))$, we get
$$(q+\varepsilon(q))H_{q}(A(q)) \leq \frac{q-1}{q}\cdot C(q)\cdot \log_{q}\Big(\frac{e}{C(q)} q(q+\varepsilon(q))\Big).$$
Recall that we want to prove $(q+\varepsilon(q))H_{q}(A(q)) \leq 1$, so that it is sufficient to establish
\begin{equation} \label{eq:ineq}
\frac{q-1}{q}\cdot C(q)\cdot \log_{q}\Big(\frac{e}{C(q)} q(q+\varepsilon(q))\Big) \leq 1.
\end{equation}
Let us now take $\varepsilon(q)$ so that \eqref{eq:ineq} is an equality. Numerical evaluations give $\varepsilon(2) \geq 1.5204$.

Suppose now that $\varepsilon$ has been shown to be increasing until the prime power $\ell$, and note $C_{\ell} = \varepsilon(\ell) + 2 - 2\sqrt{\varepsilon(\ell) + 1}$.
It is enough to show that
$$\forall q \geq \ell \quad \frac{q-1}{q}\cdot C_{\ell} \cdot \log_{q}\Big(\frac{e}{C_{\ell}}q(q+\varepsilon(\ell))\Big) \leq 1.$$
It is straightforward to check that the function defined by
$$f(x) = \frac{x-1}{x} \cdot C_{\ell} \cdot \frac{\ln\Big(\frac{e}{C_{\ell}}x(x+\varepsilon(\ell))\Big)}{\ln(x)}$$
is decreasing.
Since $f(\ell) = 1$, for all further primer powers $q$ larger than $\ell$, we have $f(q) < 1$, meaning that $\varepsilon(q) > \varepsilon(\ell)$.
This means that $\varepsilon$ is an increasing function.\\
Finally, direct computations yield
$$\lim_{q \rightarrow \infty} \varepsilon(q) = \sqrt{2} + \frac{1}{2}.$$
\end{proof}

\begin{remark}
Notice that because of the way that $\varepsilon$ is defined in the above proof, its values are suboptimal. If we had defined it to be the value such that $H_{q}(A(q))(q + \varepsilon(q)) = 1$, we would have gotten the bounds as they appear in Remark \ref{rem:table} but our proof that $\varepsilon$ is increasing would be invalid since $f(\ell) > 1$.
We have chosen the alternate definition precisely in order to show that $\varepsilon$ is increasing, judging that the gap is not too large.
We strongly believe that the actual values are also increasing.
Here is a table showing how far our expression for $\varepsilon$ is from the actual lower bound for small values of $q$.
\begin{center}
\begin{table}[h!]
\begin{tabular}{c|c|c}
$q$ & $\liminf_{k \rightarrow \infty} \frac{m(k, q)}{k} - q$ & $\varepsilon(q)$ \\ \hline
2   & 1.5276                                             & 1.5204           \\
3   & 1.5516                                             & 1.5450           \\
4   & 1.568                                              & 1.5624           \\
5   & 1.5805                                             & 1.5757           \\
7   & 1.5987                                             & 1.5951           \\
8   & 1.6057                                             & 1.6025
\end{tabular}
\end{table}
\end{center}
\end{remark}

\begin{remark}
Theorem \ref{thm:asymptoticbound} could be improved if some better $q$-ary asymptotic upper-bounding function was found.
Notice that it is sufficient for this $q$-ary asyptotic upper-bounding function to only concern minimal codes.
For instance, for $q=2$, if the Gilbert-Varshamov bound $f(x) = 1 - h_{2}(x)$ is indeed a binary asymptotic upper-bounding function, as many coding theorists seem to believe, then our method will yield a stronger asymptotic lower bound for the length of minimal codes as a result.
\end{remark}

\section{Short binary minimal codes}

In the previous section we have seen that $m(k, q) \geq (q+1)(k-1)$ is not tight when $k$ is large.
In \cite{3CB}, the authors already show that this bound is not tight when $2 \leq k \leq \sqrt{q} + 2$ (except when $q = 2$ and $k = 3$). In the present section we investigate this inequality when $k$ is small and $q=2$.

We will get different results, summarized in Theorem B in the introduction. We will first delve into the proof of Theorem \ref{thm:lowerbound}.
We will be able to deduce strong structure results for minimal codes for which equality in the bound holds.

\subsection{Some structure results}

Since we are looking for minimal-size strong blocking sets, many inequalities in the proof of 
Theorem \ref{thm:lowerbound} have to be equalities. Let $q=2$ and 
consider a strong blocking set $S \subseteq {\rm PG}(k-1, q)$ of size $(k-1)(q+1) = 3(k-1)$.
We must have $S_{H} = S_{H}'$, and both have to be of cardinality exactly $(k-1)(q-1) + 1 = k$.
We also must have $|S \cap U| = k-1$. In particular, $|S \cap U| = k-1$ must be true no matter what point $P$ and subsequent hyperplane $U$ we choose, meaning there are exactly $k$ such points, and therefore exactly $k$ hyperplanes $U$ for which $|S \cap U| = k-1$.

Recall that points in a strong blocking set correspond to columns of a generator matrix.

If we consider the $k$ points of $S_{H}$ to be the first $k$ columns, the codeword corresponding to the hyperplane $H$ is
$$c_{H} = (\underbrace{1,\dots,1}_{k \, \, \text{times}},\underbrace{0,\dots,0}_{2k-3 \, \, \text{times}}).$$

The codeword corresponding to a hyperplane $U_{j}$ (with $P$ being the point corresponding to the $j$-th column of the generator matrix) is
$$c_{U_{j}} = (1,\dots,1,0,1,\dots,1) \cdot b_{U_{j}}$$
where the $0$ is in $j$-th position and  $b_{U_{j}}$ is a codewords corresponding to the last $2k-3$ indices, of Hamming weight $w_{H}(b_{U_{j}}) = k-1$.

Since the codewords $c_{H}+c_{U_{j}}$ are in echelon form, they are linearly independent and, since there are exactly $k$ such codewords, they form the lines of a generator matrix
\begin{equation} \label{eq:matrix}
G = \begin{pmatrix}
I_{k} & P\\
\end{pmatrix}
\end{equation}

Each row of $P$ must have Hamming weight $k-1$, and since $c_{H}$ is a codeword that can only be generated by summing all the rows, all the columns of $P$ must have even Hamming weight.

We have proved then the following result.

\begin{proposition} \label{prop:matrix}
Let $\mathcal{C}$ be a binary minimal code of parameters $[n, k, d]$ with $n = 3(k-1)$.
Then $\mathcal{C}$ is equivalent to a code that has a generator matrix of form \eqref{eq:matrix}, where each row is a codeword of Hamming weight $k$ and each column of $P$ is of even Hamming weight.
Furthermore $d = k$.
\end{proposition}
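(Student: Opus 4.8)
The plan is to read off the claimed structure from the equality case of the geometric proof of Theorem \ref{thm:lowerbound}. First I would reduce to a strong blocking set of minimal size: since $\mathcal{C}$ is minimal, the set $S$ of distinct columns of any generator matrix forms a strong blocking set in ${\rm PG}(k-1,2)$, and as $|S| \leq n = 3(k-1)$ while Theorem \ref{thm:lowerbound} gives $|S| \geq (q+1)(k-1) = 3(k-1)$, the code must be projective with $|S| = 3(k-1)$ meeting the bound exactly. Because the bound is attained, every inequality in the proof of Theorem \ref{thm:lowerbound} collapses to an equality, and I would record what this forces: the affine blocking set $S_H = S \setminus H$ is already minimal, so $S_H = S_H'$ has size exactly $(q-1)(k-1)+1 = k$ by Theorem \ref{thm:Jamison}; and the final equality forces $|S \cap U| = k-1$ for each hyperplane $U$ produced by the minimality of $S_H'$. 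Since $|S_H'| = k$, there are exactly $k$ points $P_j \in S_H'$ and hence exactly $k$ such hyperplanes $U_1, \dots, U_k$, each meeting $S$ in precisely $k-1$ points.

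Next I would pass to generator matrices via the correspondence between minimal codes and strong blocking sets, ordering the columns so that the $k$ points of $S_H$ occupy the first $k$ coordinates. Then $H$ yields the codeword $c_H = (1,\dots,1,0,\dots,0)$ supported on the first $k$ coordinates, while $U_j$ yields a codeword $c_{U_j}$ that vanishes exactly at the $j$-th of the first $k$ coordinates (because $U_j \cap S_H' = \{P_j\}$) and carries a tail $b_{U_j}$ of weight $k-1$ on the remaining $2k-3$ coordinates. The central computation is that $c_H + c_{U_j} = (e_j, b_{U_j})$ over $\mathbb{F}_2$; these $k$ codewords are in echelon form on the first $k$ coordinates, hence linearly independent, and being $k$ vectors in a $k$-dimensional code they form a basis. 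Taking them as rows gives a generator matrix $G = (I_k \mid P)$ whose $j$-th row has Hamming weight $1 + (k-1) = k$.

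Two verifications then close the argument. For the column parities I would exploit that the first $k$ columns of $G$ form the identity, so the first $k$ coordinates of a codeword $\sum_{j \in T} \text{row}_j$ are exactly the indicator vector of $T$; since $c_H$ has all ones there, necessarily $T = \{1,\dots,k\}$, so $c_H$ is the sum of all rows and its (zero) tail equals $\sum_{j=1}^{k} b_{U_j}$, forcing every column of $P$ to sum to zero, i.e. to have even weight. For the minimum distance, the rows give $d \leq k$, while Theorem \ref{thm:minimumdistance} gives $d \geq (q-1)(k-1)+1 = k$, so $d = k$.

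The main obstacle I anticipate is the bookkeeping of the equality case: one must check that equality propagates through the whole chain of inequalities in the proof of Theorem \ref{thm:lowerbound} at once, so that $S_H$ is genuinely minimal and that $|S \cap U| = k-1$ holds uniformly for every hyperplane $U$ no matter which point $P$ was selected, guaranteeing exactly $k$ such hyperplanes. Once this rigidity is in place, the identification of the echelon codewords $c_H + c_{U_j}$ and the even-weight condition on the columns of $P$ are routine.
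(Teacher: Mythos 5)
Your proposal is correct and follows essentially the same route as the paper: it extracts the equality case of the geometric proof of Theorem \ref{thm:lowerbound} (forcing $S_H = S_H'$ of size $k$ and $|S \cap U_j| = k-1$ for exactly $k$ hyperplanes), builds the echelon codewords $c_H + c_{U_j} = (e_j, b_{U_j})$ to get $G = (I_k \mid P)$, and deduces the even column weights of $P$ from the fact that $c_H$ can only be the sum of all rows. The only immaterial differences are that you derive $d \geq k$ from Theorem \ref{thm:minimumdistance} instead of from the maximality of $|S \cap H|$, and that you spell out the projectivity reduction the paper leaves implicit.
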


Our purpose is to investigate wether it is possible for matrices of the form \eqref{eq:matrix} to be generator matrices of binary minimal codes. In order to simplify notations we will write $N = k-1$ throughout this part.

Consider a matrix of form \eqref{eq:matrix}.
The submatrix $P$ has $N+1$ rows and $2N-1$ columns.
We write $P_{j} \subseteq \{1, \dots, 2N-1\}$ the support of the $j$-th row.
Since each row of $P$ has Hamming weight $N$, each corresponding subset $P_{j}$ of $\{0, \dots, 2N-1\}$ has cardinality $N$.

The minimal code corresponding to this generator matrix has minimal distance $N+1$ (because of the maximality of the hyperplane $H$ defined above as the hyperplane with the largest intersection with the strong blocking set), and since the weight of each codeword satisfies the bound of Pro\-po\-si\-tion~\ref{prop:singleton}, the maximal allowed weight is $2N$.

\begin{lemma}\label{lem:corr}
If there is a $[3N,N+1]$ minimal code, then there exists a family of $N+1$ subsets of $\{1, \dots, 2N-1\}$ of size $N$ and pairwise intersection between $\frac{N-1}{2}$ and $\frac{N+1}{2}$, such that the symmetric difference of all $N+1$ subsets is the empty set.
\end{lemma}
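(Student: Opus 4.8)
The plan is to read off the required set system directly from the generator matrix furnished by Proposition~\ref{prop:matrix}. Starting from a minimal code with these parameters, I would fix a generator matrix $G = (I_{N+1} \mid P)$ of the form \eqref{eq:matrix}, write $r_1, \dots, r_{N+1}$ for its rows, and let $P_j \subseteq \{1, \dots, 2N-1\}$ denote the support of the $j$-th row of $P$. The cardinality condition $|P_j| = N$ is then immediate, since each row of $G$ has weight $N+1$ (Proposition~\ref{prop:matrix}) and contributes exactly one $1$ to the identity block.

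Two of the three desired properties translate directly into the structural facts about $P$ recorded in Proposition~\ref{prop:matrix}. The symmetric difference $P_1 \triangle \cdots \triangle P_{N+1}$ is precisely the set of columns of $P$ carrying an odd number of $1$'s; since every column of $P$ has even weight, this symmetric difference is empty. For the upper intersection bound I would examine, for each pair $i \neq j$, the codeword $r_i + r_j$, whose support has weight $2 + |P_i \triangle P_j| = 2 + 2N - 2|P_i \cap P_j|$. Applying the minimum-distance bound $d = N+1$ yields $2 + 2N - 2|P_i \cap P_j| \geq N+1$, i.e. $|P_i \cap P_j| \leq \frac{N+1}{2}$.

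The lower bound $|P_i \cap P_j| \geq \frac{N-1}{2}$ is where a little more care is needed, and I expect this to be the crux of the argument. Instead of the minimum distance it should use the \emph{upper} weight bound of Proposition~\ref{prop:singleton}, $w_H(c) \leq 2N$, applied to the complementary codeword $c_H + r_i + r_j = \sum_{\ell \neq i, j} r_\ell$. Using that all columns of $P$ have even weight, the $P$-block of this codeword is again supported exactly on $P_i \triangle P_j$, while its identity block has weight $N-1$; its total weight is therefore $3N - 1 - 2|P_i \cap P_j|$, and $w_H \leq 2N$ gives $|P_i \cap P_j| \geq \frac{N-1}{2}$. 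Combining the two estimates produces the claimed range, and together with the size and symmetric-difference conditions this establishes the lemma. The only genuinely delicate point is recognising that the lower bound must be extracted from the maximum-weight constraint on $\sum_{\ell \neq i, j} r_\ell$ rather than from the minimum distance, which on its own is too weak to control $|P_i \cap P_j|$ from below.
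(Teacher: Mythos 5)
Your proposal is correct and follows essentially the same route as the paper: both derive the upper intersection bound by applying the minimum distance $d=N+1$ to the codewords $r_i+r_j$, and the lower bound by applying the weight cap $2N$ from Proposition~\ref{prop:singleton} to $c_H+r_i+r_j=\sum_{\ell\neq i,j}r_\ell$, with the empty symmetric difference coming from the even column weights of $P$. The step you flag as the crux is indeed exactly how the paper obtains $|P_i\cap P_j|\geq \frac{N-1}{2}$.
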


\begin{proof}
We write $A\Delta B = (A \cup B) \setminus (A \cap B)$ the symmetric difference of the sets $A$ and $B$.
Adding the $i$-th and the $j$-th row of $G$ produces a codeword with weight $2 + |P_{i} \Delta P_{j}|$ (the term $2$ comes from the $I_{k}$ in $G$).
Adding the codeword $c_{H}$ to this sum produces a codeword with weight $N-1 + |P_{i} \Delta P_{j}|$.
Since both of these codewords have weights between $N+1$ and $2N$, we can deduce bounds on the size of $|P_{i} \Delta P_{j}|$: $N +1 \leq 2 + |P_{i} \Delta P_{j}|$ yields $|P_{i} \Delta P_{j}| \geq N-1$, while $N-1 + |P_{i} \Delta P_{j}| \leq 2N$ yields $|P_{i} \Delta P_{j}| \leq N+1$, so we get $N-1 \leq |P_{i} \Delta P_{j}| \leq N+1$.
Since $|P_{i} \Delta P_{j}| = 2N - 2|P_{i} \cap P_{j}|$ we get
\begin{equation}\label{eq:inter}
\frac{N-1}{2} \leq |P_{i} \cap P_{j}| \leq \frac{N+1}{2}
\end{equation}
Finally, since $c_{H}$ is a codeword and can only be generated by summing all rows of the generator matrix, all final $2N-1$ columns must be of even weight, i.e. the symmetric difference of the subsets $P_{i}$ must be empty.
\end{proof}

In particular, when $N = k-1$ is even, there is only one possible value for the cardinality of the intersection of the subsets $P_{i}$ and $P_{j}$.

Note that reasoning in terms of weights gives additional conditions for the size of the symmetric difference of the subsets corresponding to $3$ rows of $P$, and so on.

\subsection{The case when $N$ is even}

When $N$ is even, searching for minimal-size strong blocking sets it is possible to take $P_{1} = \{1, \dots, N\}$ and $P_{2} = \{N/2 +1, \dots, N + N/2 \}$ without loss of generality because of the above discussion.

For any further subset $I \subseteq \{1, \dots, 2N-1\}$ we write
\begin{align*}
a_{I} &= |I \cap (P_{1} \setminus P_{2}) | \\
b_{I} &= |I \cap (P_{1} \cap P_{2}) | \\
c_{I} &= |I \cap (P_{2} \setminus P_{1}) | \\
d_{I} &= |I \cap \big(\{1, \dots, 2N-1\}\setminus (P_{1} \cap P_{2})\big) | \\
\end{align*}
The following lemma presents some restrictions on the values that $a_{I},\dots, d_{I}$ can take.

\begin{lemma} \label{lem:structure}
Let $\mathcal{C}$ be a minimal binary code of dimension $k \geq 3$, and with an equivalent code that has a generator matrix of form \eqref{eq:matrix}.
If $P_{1}$, $P_{2}$, and $I$ are the supports of $3$ rows of the submatrix $P$ then the $a_{I}, \dots, d_{I}$ defined above verify $a_{I} = c_{I}$, $b_{I} = d_{I}$ and $|a_{I} - b_{I}| \leq 1$.
\end{lemma}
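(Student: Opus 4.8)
The plan is to read off all three conclusions from the Hamming weights of a handful of explicit codewords built from the three rows of $P$ whose supports are $P_1$, $P_2$ and $I$. Recall from Proposition~\ref{prop:matrix} that the minimum distance is $d = N+1$, while Proposition~\ref{prop:singleton} bounds every nonzero codeword by $w_{H}(c) \le n-k+1 = 2N$; thus every nonzero codeword has weight in $[N+1, 2N]$, and I will repeatedly compare against these two endpoints. Throughout I partition $\{1,\dots,2N-1\}$ into the four disjoint regions $P_1\setminus P_2$, $P_1\cap P_2$, $P_2\setminus P_1$ and $\{1,\dots,2N-1\}\setminus(P_1\cup P_2)$, in which $I$ has $a_I,b_I,c_I,d_I$ points respectively. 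Since $N$ is even, \eqref{eq:inter} forces $|P_i\cap P_j| = N/2$ for every pair of rows, so in particular $|I\cap P_1| = |I\cap P_2| = N/2$, and the four regions have sizes $N/2,\,N/2,\,N/2,\,N/2-1$.

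The equalities $a_I = c_I$ and $b_I = d_I$ I expect to be immediate from counting a single row at a time. The identities $a_I + b_I = |I\cap P_1| = N/2$, $\ b_I + c_I = |I\cap P_2| = N/2$ and $a_I + b_I + c_I + d_I = |I| = N$ form a small linear system whose solution gives $a_I = c_I$ and $b_I = d_I$ at once.

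For the bound $|a_I - b_I| \le 1$ I would use two codewords coming from the three rows at hand. Let $r$ be the sum of the three generator rows with supports $P_1,P_2,I$: its restriction to the $I_k$-block has weight $3$, and its restriction to $P$ is the three-fold symmetric difference $P_1\,\Delta\, P_2\,\Delta\, I$. Evaluating this region by region (an index contributes exactly when it lies in an odd number of the three sets) I would obtain $|P_1\,\Delta\, P_2\,\Delta\, I| = N - a_I - c_I + b_I + d_I = N + 2(b_I - a_I)$ after substituting $a_I = c_I$ and $b_I = d_I$, so that $w_{H}(r) = N + 3 + 2(b_I - a_I)$. The minimum-distance bound $w_{H}(r) \ge N+1$ then yields $a_I - b_I \le 1$. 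For the reverse inequality I pass to $r + c_H$: adding $c_H = (1,\dots,1,0,\dots,0)$ leaves the $P$-block unchanged but turns the $I_k$-block into a vector of weight $N-2$, whence $w_{H}(r+c_H) = 2N - 2 + 2(b_I - a_I)$, and the upper bound $w_{H}(r+c_H) \le 2N$ from Proposition~\ref{prop:singleton} gives $b_I - a_I \le 1$. Combining the two inequalities proves $|a_I - b_I| \le 1$.

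The only routine caveat is that both $r$ and $r+c_H$ must be nonzero for the weight bounds to apply: $r$ always is, since its $I_k$-block has weight $3$; and because $c_H$ is the sum of all $k$ rows, $r+c_H$ is the sum of the remaining $k-3$ rows and hence nonzero exactly when $k \ge 4$, while the case $k=3$ (where $N=2$ and $a_I + b_I = 1$ already force $|a_I - b_I| = 1$) is trivial. The main conceptual step, and the part most worth writing out carefully, is the region-by-region evaluation of the symmetric-difference weight together with the observation that the two sides of $|a_I - b_I| \le 1$ arise from two \emph{different} codewords: the lower bound by applying the minimum distance to $r$, and the upper bound by applying Proposition~\ref{prop:singleton} to $r + c_H$.
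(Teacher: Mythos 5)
Your proof is correct and follows essentially the same route as the paper's: the identical linear system yields $a_{I} = c_{I}$ and $b_{I} = d_{I}$, and the identical two codewords (the sum of the three rows, and that sum plus $c_{H}$), bounded below by the minimum distance $N+1$ and above by Proposition~\ref{prop:singleton}, give the two sides of $|a_{I} - b_{I}| \leq 1$; your parametrization by $b_{I}-a_{I}$ is just a rewriting of the paper's $s = b_{I}+d_{I}$. The only difference is your explicit check that $r$ and $r + c_{H}$ are nonzero (with the separate treatment of $k=3$), a minor point the paper leaves implicit.
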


\begin{proof}
First, note that $a_{I} + b_{I} = |P_{1}\cap I| = N/2 = |P_{2}\cap I| = b_{I} + c_{I}$, yielding $a_{I} = c_{I}$. Furthermore, since $a_{I} + b_{I} + c_{I} + d_{I} = N = a_{I} + 2b_{I} + c_{I}$, we also have $b_{I} = d_{I}$.

Now note $s = b_{I} + d_{I}$.
The symmetric difference of all three sets has cardinality $N + s - (N - s) = 2s$.
This sum of three rows produces a codeword $c = c_{P_{1} + P_{2} + I}$ with weight $w_{H}(c) = 3 + 2s$.
Since $w_{H}(c) = 3 + 2s \geq N+1$ because of the weight conditions we get $s \geq N/2 - 1$.
Furthermore, remember that $c + c_{H}$ is also a codeword, of weight $N - 2 + 2s = N + 2s - 2$.
Now the other weight inequality yields $N + 2s - 2 \leq 2N$, giving $s \leq N/2 + 1$.
%\Martin{Many thanks to Wolfgang Schmid for finding an arithmetic error in this proof, which terrified me at first, but fortunately everything is all right}
This means that $N/2 - 1 \leq b_{I} + d_{I} \leq N/2 + 1$, which also implies that $N/2 - 1 \leq a_{I} + c_{I} \leq N/2 + 1$.
Finally, since $a_{I} = c_{I}$ and $b_{I} = d_{I}$ we get $|a_{I} - N/4| \leq 1/2$ and $|b_{I} - N/4| \leq 1/2$, giving $|a_{I} - b_{I}| \leq 1$.
\end{proof}

This lemma is particularly useful, as it allows us to restrict significantly which further subsets we may choose when we have already chosen the initial two.
Note that any new subsets must verify the above conditions for any two choices of $P_{1}$ and $P_{2}$ among the already chosen subsets, meaning that as $N$ increases we expect this to be more and more unlikely.

\medskip

\begin{proposition}
There is no binary $[18, 7]$ minimal code.
\end{proposition}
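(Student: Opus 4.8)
The plan is to derive a contradiction from the combinatorial configuration forced by Lemma~\ref{lem:corr}. Here $k = 7$, so $N = k - 1 = 6$ and $3N = 18 = n$; thus a putative binary $[18, 7]$ minimal code is exactly a minimal code of length $3(k-1)$, and Lemma~\ref{lem:corr} applies. It produces a family of $N + 1 = 7$ subsets $S_{1}, \dots, S_{7} \subseteq \{1, \dots, 2N-1\} = \{1, \dots, 11\}$, each of cardinality $N = 6$, with pairwise intersections satisfying $\tfrac{N-1}{2} \le |S_{i} \cap S_{j}| \le \tfrac{N+1}{2}$, and with empty total symmetric difference. Since $N = 6$ is even, the interval $[\tfrac{5}{2}, \tfrac{7}{2}]$ contains the single integer $3$, so every pairwise intersection equals exactly $3$; and the vanishing of the symmetric difference means that every point of $\{1, \dots, 11\}$ lies in an even number of the sets $S_{i}$.

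I would then run a double-counting argument on the replication numbers. For each point $j \in \{1, \dots, 11\}$ let $r_{j}$ denote the number of sets $S_{i}$ containing $j$. Counting incidences in two ways gives
$$\sum_{j=1}^{11} r_{j} = \sum_{i=1}^{7} |S_{i}| = 42,$$
while counting, for each point, the pairs of sets passing through it gives
$$\sum_{j=1}^{11} \binom{r_{j}}{2} = \sum_{1 \le i < i' \le 7} |S_{i} \cap S_{i'}| = \binom{7}{2} \cdot 3 = 63.$$
Combining these two identities yields $\sum_{j} r_{j}^{2} = 2 \cdot 63 + 42 = 168$.

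The final step is a parity observation. Since each $r_{j}$ is even, write $r_{j} = 2 s_{j}$; the two identities above become $\sum_{j} s_{j} = 21$ and $\sum_{j} s_{j}^{2} = 42$, whence $\sum_{j} s_{j}(s_{j} - 1) = 42 - 21 = 21$. But each summand $s_{j}(s_{j} - 1)$ is a product of two consecutive integers and hence even, so the left-hand side is even, contradicting the odd value $21$. This contradiction shows that no such family of subsets, and therefore no binary $[18, 7]$ minimal code, can exist.

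I do not expect a genuine technical obstacle here: once the configuration from Lemma~\ref{lem:corr} is written down, the only real difficulty is spotting the right invariant to count. The key insight is that the even-replication condition (coming from the empty symmetric difference) forces every $r_{j}$ to be even, so that the two moment identities $\sum_{j} r_{j} = 42$ and $\sum_{j} r_{j}^{2} = 168$ together carry a parity obstruction. I would be careful to verify that it is precisely the evenness of $N = 6$ that pins all intersections to the single value $3$; for odd $N$ the two admissible intersection sizes would weaken this rigidity, and this particular clean argument would not transfer directly.
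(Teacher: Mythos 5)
Your proof is correct, and it takes a genuinely different route from the paper's. The paper attacks $[18,7]$ with the finer structural Lemma~\ref{lem:structure}: it normalizes $P_{1} = \{1,\dots,6\}$ and $P_{2} = \{4,\dots,9\}$, partitions the remaining admissible subsets into three families according to their intersection with $\{10,11\}$, argues via the even-column condition that three subsets must come from one family, and then rules out each case using the weight window $[N+1, 2N]$ coming from the minimum distance and Proposition~\ref{prop:singleton}. You instead use only the coarser Lemma~\ref{lem:corr}: since $N=6$ is even, all pairwise intersections are pinned to $3$, the empty total symmetric difference forces every replication number $r_{j}$ to be even, and the two moment identities $\sum_{j} r_{j} = 42$ and $\sum_{j}\binom{r_{j}}{2} = 63$ give, after writing $r_{j} = 2s_{j}$, the pair $\sum_{j} s_{j} = 21$ and $\sum_{j} s_{j}^{2} = 42$, which is impossible since $\sum_{j} s_{j}^{2} \equiv \sum_{j} s_{j} \pmod 2$; all your computations check out. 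Your approach buys brevity (no normalization, no case analysis) and, more interestingly, generality: for any even $N$ the same count yields $\sum_{j} s_{j}(s_{j}-1) = N(N+1)(N-2)/8$, which must be even, and this quantity is odd precisely when $N \equiv 4, 6 \pmod 8$; thus your argument simultaneously recovers the paper's theorem for $N \equiv 4 \pmod 8$ (i.e.\ $k \equiv 5 \pmod 8$) and adds the new class $N \equiv 6 \pmod 8$ (i.e.\ $k \equiv 7 \pmod 8$), of which this proposition ($N=6$) is the first instance. What the paper's approach buys in exchange is machinery that is reused downstream: Lemma~\ref{lem:structure} also drives the $[24,9]$ and $[30,11]$ propositions, where your parity count alone is inconclusive (for $N=8$ and $N=10$ the quantity $N(N+1)(N-2)/8$ equals $54$ and $110$, both even).
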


\begin{proof}
Here we have $P_{1} = \{1, 2, 3, 4, 5, 6\}$ and $P_{2} = \{4, 5, 6, 7, 8, 9\}$.
We are interested in the columns corresponding to $10$ and $11$, both of these must be of even weight.
Because of Lemma \ref{lem:structure}, the only possible values for $d_{I}$ are $1$ and $2$. We divide all possible subsets that satisfy Lemma \ref{lem:structure} into 3 families, depending on their intersection with $\{10, 11\}$:
$$A = \{ I \subseteq \{1, \dots, 2N-1\} \mid I \cap \{10, 11\} = \{10\} \}$$
$$B = \{ I \subseteq \{1, \dots, 2N-1\} \mid I \cap \{10, 11\} = \{11\} \}$$
$$C = \{ I \subseteq \{1, \dots, 2N-1\} \mid I \cap \{10, 11\} = \{10, 11\} \}$$
Now in order to have all rows sum to $0$ we must have $3$ subsets from one family and $1$ from both others. First suppose there are $3$ subsets from $\varepsilon(q)$ (by symmetry this also covers the case where there are $3$ from $\beta$). Then their symmetric difference is $\{4, 5, 6, 10\}$ and the symmetric difference of all five chosen subsets is $S = \{1, 2, 3, 4, 5, 6, 7, 8, 9, 10\}$, which produces a codeword of weight $5 + |S| = 5 + 10 = 15$ while the maximal allowed weight is $2N = 12$ by Proposition~\ref{prop:singleton}, a contradiction.
This means that there are $3$ subsets from $\gamma$.
In this case it can be checked that their symmetric difference has to be $\{1, 2, 3, 7, 8, 9, 10, 11\}$ and so the symmetric difference of all five codewords is $S' = \{10, 11\}$, which when added with $c_{H}$ produces a codeword of weight $(7-5) + |S'| = 2 + 2 = 4$, while the minimum allowed weight is $N+1 = 7$ (since the minimum distance has to be $d = 7$ by Lemma \ref{lem:structure}), again a contradiction.
\end{proof}

\begin{proposition}
There is no binary $[24,9]$ minimal code.
\end{proposition}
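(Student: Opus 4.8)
The plan is to mimic the reduction of the previous proposition and then push a single inclusion--exclusion computation through to completion. By Proposition~\ref{prop:matrix} applied with $N = k - 1 = 8$, a binary $[24, 9]$ minimal code would be equivalent to one with generator matrix $(I_9 \mid P)$, where the nine rows of $P$ have supports $P_1, \dots, P_9 \subseteq \{1, \dots, 15\}$, each of size $N = 8$, every column of $P$ has even weight, the minimum distance equals $N + 1 = 9$, and, by Proposition~\ref{prop:singleton}, every codeword has weight at most $n - k + 1 = 2N = 16$. For any $T \subseteq \{1, \dots, 9\}$ the sum of the rows indexed by $T$ is a codeword of weight $|T| + \big|\triangle_{j \in T} P_j\big|$; adding $c_H$ (the sum of all nine rows, which is the all-ones word on the first nine coordinates precisely because the columns of $P$ are even) gives a second codeword of weight $(9 - |T|) + \big|\triangle_{j \in T} P_j\big|$, using that the total symmetric difference is empty. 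Both weights must lie in $[9, 16]$, which confines $\big|\triangle_{j\in T} P_j\big|$ to a narrow window depending only on $|T|$.

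The engine of the proof is the identity
$$\Big|\triangle_{j \in T} P_j\Big| \;=\; \sum_{\emptyset \neq S \subseteq T} (-2)^{|S| - 1}\, \Big|\bigcap_{j \in S} P_j\Big|,$$
which I would use to determine every intersection number by induction on $|T|$. For fixed $T$ the right-hand side is an affine function of the single unknown top intersection $\big|\bigcap_{j \in T} P_j\big|$ with coefficient $(-2)^{|T| - 1}$, all proper-subset terms being already fixed; combined with the two-sided weight window and the integrality of the intersection numbers, this pins the value uniquely. Carrying this out yields $\big|\bigcap_{j \in S} P_j\big| = 2^{\,4 - |S|}$ for $1 \le |S| \le 4$, i.e.\ the halving sequence $8, 4, 2, 1$ for singletons, pairs, triples and quadruples. (The pair value $4$ is exactly Lemma~\ref{lem:corr}, the triple value $2$ is the case $a_I = b_I = 2$ of Lemma~\ref{lem:structure}, and the quadruple value $1$ follows from $\big|\triangle_{j \in T} P_j\big| = 16 - 8\,\big|\bigcap_{j\in T} P_j\big| \in [5,11]$.)

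The contradiction appears at the fifth level: substituting $8, 4, 2, 1$ into the identity for a five-element set $T$ collapses every lower term and leaves
$$\Big|\triangle_{j \in T} P_j\Big| \;=\; 40 - 80 + 80 - 40 + 16\,\Big|\bigcap_{j \in T} P_j\Big| \;=\; 16\,\Big|\bigcap_{j \in T} P_j\Big|.$$
Since this symmetric difference is a subset of $\{1, \dots, 15\}$, it has size at most $15 < 16$, which forces the quintuple intersection to vanish and the symmetric difference to be empty; the sum of the five corresponding rows is then a nonzero codeword of weight $5$, contradicting the minimum distance $9$. I expect the main obstacle to be the inductive bookkeeping that locks each intersection number to a single integer: this relies essentially on both the lower bound of Theorem~\ref{thm:minimumdistance} and the Singleton-type upper bound of Proposition~\ref{prop:singleton}, the latter applied to the shifted codeword obtained by adding $c_H$. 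Once the four intersection values are fixed, the fifth-level identity is completely forced and the contradiction is immediate.
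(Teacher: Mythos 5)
Your proof is correct, and it takes a genuinely different route from the paper's. The paper works with coordinates: after normalizing $P_1=\{1,\dots,8\}$ and $P_2=\{5,\dots,12\}$, it uses Lemma~\ref{lem:structure} to force $d_I=2$ for each of the seven remaining rows, views their traces on $\{13,14,15\}$ as a $7\times 3$ matrix with even column sums, concludes by pigeonhole that three of the subsets must contain a common pair (say $\{13,14\}$), and then eliminates candidates via Lemma~\ref{lem:structure} until only one subset containing $\{13,14\}$ survives --- a contradiction. Your argument is coordinate-free and needs no normalization or case analysis: the inclusion--exclusion identity for symmetric differences, combined with the weight window $[9,16]$ applied both to the sum of the rows indexed by $T$ and to the codeword obtained by adding $c_H$ (which is the sum of the rows indexed by the complement of $T$; both are nonzero whenever $\emptyset \neq T \subsetneq \{1,\dots,9\}$, so the minimum-distance bound legitimately applies), pins every $s$-fold intersection to $2^{4-s}$ for $s\le 4$. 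I checked the windows: $[7,9]$ for pairs (parity forces $16-2x=8$), $[6,10]$ for triples (divisibility forces $4x=8$), $[5,11]$ for quadruples (forcing $16-8x=8$), and your level-five identity $\bigl|\triangle_{j\in T}P_j\bigr| = 16\,\bigl|\bigcap_{j\in T}P_j\bigr|$ is indeed incompatible both with that window and with the ambient set of size $15$, so the contradiction is sound. What your approach buys: a cleaner, fully deterministic proof that re-derives Lemma~\ref{lem:corr} and the relevant case of Lemma~\ref{lem:structure} as its levels $s=2,3$ instead of citing them. What it costs, and why the paper does not argue this way in the neighbouring propositions: the fact that every level is pinned to a unique integer is special to $N=8$. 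Already for $N=6$ and $N=10$ the triple window ($[4,8]$ and $[8,12]$ respectively) contains two admissible multiples of $4$, so the induction stalls at the third level; the paper's coordinate framework, which supports ad hoc and computer-assisted case analysis, is what carries the $[18,7]$ and $[30,11]$ cases.
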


\begin{proof}
Here, after taking $P_{1}$ and $P_{2}$, Lemma \ref{lem:structure} forces all remaining subsets $I$ to have $d_{I} = 2$. Since $d_{I}$ is calculated by considering the intersection of a subset with $\{13, 14, 15\}$, and since there are $7$ subsets left, they form a $7\times3$ matrix where each column has an even amount of $1$'s, and so there is at least one column with at least three $0$'s.
Translated back into subsets, this means that there have to be $3$ subsets containing (without loss of generality) $\{13, 14\}$.
Without loss of generality assume that the first one is $I$.
Since $a_{I} = b_{I} = c_{I} = d_{I} = 2$ without loss of generality we choose $I = \{1, 2, 5, 6, 9, 10, 13, 14\}$. Now applying Lemma \ref{lem:structure} with all combinations of two already chosen subsets, we obtain that the only subset left that contains $\{13, 14\}$ is $J = \{3, 4, 5, 6, 11, 12, 13, 14\}$, while we need at least two in order to find three subsets containing $\{13, 14\}$ as required, a contradiction.
\end{proof}

\begin{proposition}
There is no binary $[30,11]$ minimal code.
\end{proposition}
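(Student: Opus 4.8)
The plan is to run the same scheme as in the previous two propositions, now with $N = k-1 = 10$, which is even, so that Lemma~\ref{lem:structure} applies. Normalize $P_1 = \{1,\dots,10\}$ and $P_2 = \{6,\dots,15\}$ inside the universe $\{1,\dots,2N-1\} = \{1,\dots,19\}$, and set $R_4 := \{16,17,18,19\}$, the four indices outside $P_1\cup P_2$. By Lemma~\ref{lem:structure}, every one of the remaining $N-1 = 9$ subsets $I$ satisfies $a_I = c_I$, $b_I = d_I$, $a_I + b_I = N/2 = 5$ and $|a_I - b_I| \le 1$; since $N/4 = 5/2$ is not an integer this forces $(a_I,b_I) \in \{(2,3),(3,2)\}$, and in particular $d_I = |I \cap R_4| \in \{2,3\}$. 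Recording each remaining subset restricted to $R_4$ then yields a $9\times 4$ binary matrix whose rows have weight $2$ or $3$ and whose columns all have even weight (Proposition~\ref{prop:matrix}, since $P_1,P_2$ avoid $R_4$). This is the exact analogue of the $7\times 3$ matrix that closed the $[24,9]$ case, except that rows may now also have weight $3$, which is the source of the extra difficulty.

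Next I would turn the weight conditions into global counting identities. Writing $r_u$ for the number of the eleven subsets containing $u$, the conditions $|P_i| = 10$ and ``pairwise intersection $5$'' give $\sum_u r_u = 110$ and $\sum_u \binom{r_u}{2} = \binom{11}{2}\cdot 5 = 275$, hence $\sum_u r_u^2 = 660$; moreover every $r_u$ is even (each column of $P$ has even weight by Proposition~\ref{prop:matrix}). Substituting $r_u = 2s_u$ reduces this to $\sum s_u = 55$ and $\sum s_u^2 = 165$ over the $19$ indices with $s_u \in \{0,\dots,5\}$, i.e. $\sum(s_u-3) = -2$ and $\sum(s_u-3)^2 = 6$. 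The only integer solutions are the two replication profiles: (A) thirteen indices with $r_u = 6$, four with $r_u = 4$, two with $r_u = 8$; and (B) sixteen with $r_u = 6$ and one each with $r_u = 2,4,8$. Thus the problem is reduced to excluding these two profiles.

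To rule them out I would invoke the finer weight conditions the excerpt anticipates. Read symmetrically, Lemma~\ref{lem:structure} says every three of the eleven subsets meet in exactly $2$ or $3$ points, so $\sum_u \binom{r_u}{3} = \sum_{\text{triples}} |P_i \cap P_j \cap P_k| \in [2\binom{11}{3},\,3\binom{11}{3}] = [330,495]$. The same bookkeeping on four rows, using that the codewords $\sum_{i\in T}c_i$ and $\sum_{i\in T}c_i + c_H$ have weights $|T| + |\Delta_{i\in T}P_i|$ and $(k-|T|) + |\Delta_{i\in T}P_i|$ in $[N+1,2N]$, forces $|P_i\Delta P_j\Delta P_k\Delta P_l| \in \{8,12\}$ for every quadruple and thus pins $\sum_u \binom{r_u}{4}$ into a narrow range. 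I would combine these with the refined ``type P / type Q'' split of the nine subsets according to $(a_I,b_I) = (2,3)$ or $(3,2)$: if $p$ of them are type P then $p$ is even and the region sums $\sum_{u\in R_1}r_u, \dots, \sum_{u\in R_4}r_u$ become explicit functions of $p$, which should over-determine both profiles (A) and (B).

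The main obstacle is precisely this last elimination: the second-moment data alone does not suffice, since both (A) and (B) satisfy the triple bound, so one must descend to the four-row (and possibly five-row) weight constraints and, as in the $[18,7]$ and $[24,9]$ proofs, apply Lemma~\ref{lem:structure} with base pairs other than $(P_1,P_2)$ to bound how many subsets can share a prescribed pattern on $R_4$. Keeping this case analysis from branching uncontrollably is the crux; I expect the clean finish to mirror the earlier arguments, exhibiting in each surviving case a set $T$ of rows whose codeword weight $|T| + |\Delta_{i\in T}P_i|$ leaves the interval $[N+1,2N]$, contradicting either the minimum distance $d = N+1$ or Proposition~\ref{prop:singleton}.
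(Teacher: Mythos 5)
Your scaffolding is correct and consistent with the paper's setup: the normalization $P_1=\{1,\dots,10\}$, $P_2=\{6,\dots,15\}$, the forced pattern $(a_I,b_I)\in\{(2,3),(3,2)\}$ from Lemma~\ref{lem:structure}, the $9\times 4$ even-column matrix on $R_4$, and the moment identities $\sum_u r_u=110$, $\sum_u r_u^2=660$ with all $r_u$ even, which indeed leave only your two replication profiles (A) and (B). But the proof has a genuine gap exactly where you flag it: nothing in the proposal actually eliminates (A) and (B). Knowing the multiset of column sums is far weaker than knowing whether eleven such subsets exist, and by your own account both profiles survive the triple-intersection bound; you give no argument that the four-row constraints, the type-P/type-Q split, or re-basing Lemma~\ref{lem:structure} at other pairs terminate in a contradiction --- you only ``expect'' they do. This remainder is not routine: the paper itself does not carry out this elimination by hand. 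Its proof of this proposition is an exhaustive Sagemath search (constraint propagation over candidate subsets, $248$ seconds of computation), which is a strong hint that the residual case analysis is substantial. As written, your argument establishes necessary conditions but not the nonexistence statement; it is a plan for a proof, not a proof.

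A secondary inaccuracy: the quadruple condition $|P_i\Delta P_j\Delta P_k\Delta P_l|\in\{8,12\}$ does not follow from the weight conditions you cite. Those give $7\le|\Delta|\le 13$, and parity of $\sum_{i\in T}|P_i|$ gives $|\Delta|\in\{8,10,12\}$. Excluding $10$ requires the inclusion--exclusion identity $|\Delta|=40-2\cdot 6\cdot 5+4T-8q=-20+4T-8q$ (where $T$ is the sum of the four triple intersections and $q$ the quadruple intersection), which shows $4$ divides $|\Delta|$; the claim is true, but this is an argument you would need to supply, and it illustrates how each step of the intended elimination demands separate care. In short: same spirit as the hand proofs for $[18,7]$ and $[24,9]$, correct necessary conditions, but the core of the nonexistence argument is missing, and the paper replaces precisely that core with a computer search.
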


\begin{proof}
This was proved by a computer search with the programming language Sagemath in $248$ seconds on a single 9th gen. Intel i5 core processor, restricting all remaining subsets of $\{1, \dots, 2N-1\}$ as more and more are selected.
This quickly yields the impossibility of any $11$ sets satisfying all conditions, and thus there is no strong blocking set of size $30$.
\end{proof}

\begin{theorem}
If $N = 4 \pmod 8$, there is no binary $[3N,N+1]$ minimal code.
\end{theorem}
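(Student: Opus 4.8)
The plan is to push the structural analysis of the even case to its logical extreme and then extract a parity contradiction from the even-column-weight condition. Assume, for contradiction, that a binary $[3N, N+1]$ minimal code exists with $N \equiv 4 \pmod 8$. By Proposition \ref{prop:matrix} we may take a generator matrix of the form \eqref{eq:matrix}, so that the rows of $P$ correspond to subsets $P_{1}, \dots, P_{N+1}$ of $\{1, \dots, 2N-1\}$ of size $N$, and every column of $P$ has even weight. Since $N$ is even, the only integer in the interval $[\frac{N-1}{2}, \frac{N+1}{2}]$ is $N/2$, so \eqref{eq:inter} forces $|P_{i} \cap P_{j}| = N/2$ for all $i \neq j$; as in the even case we normalise $P_{1} = \{1, \dots, N\}$ and $P_{2} = \{N/2+1, \dots, N + N/2\}$.

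First I would pin down the four local parameters of every remaining row. Applying Lemma \ref{lem:structure} with $P_{1}, P_{2}$ and any third row $I$ gives $a_{I} = c_{I}$, $b_{I} = d_{I}$ together with $|a_{I} - N/4| \leq 1/2$ and $|b_{I} - N/4| \leq 1/2$. The key arithmetic observation is that $N \equiv 4 \pmod 8$ makes $N/4$ an \emph{odd integer}; since $a_{I}, b_{I}, c_{I}, d_{I}$ are integers lying within $1/2$ of $N/4$, they are all forced to equal $N/4$ exactly. In particular each of the $N-1$ rows other than $P_{1}, P_{2}$ meets the complementary region $D := \{1, \dots, 2N-1\} \setminus (P_{1} \cup P_{2})$ in exactly $d_{I} = N/4$ points.

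The contradiction then comes from double counting the incidences inside $D$, which has size $(2N-1) - |P_{1} \cup P_{2}| = (2N-1) - \tfrac{3N}{2} = \tfrac{N}{2} - 1$. On one hand, neither $P_{1}$ nor $P_{2}$ meets $D$, so the coverage of $D$ is provided entirely by the $N-1$ remaining rows, each contributing $N/4$; hence the total number of row–column incidences inside $D$ equals $(N-1)\cdot \tfrac{N}{4}$. On the other hand, the even-column-weight condition (equivalently, that the symmetric difference of all $P_{i}$ is empty) says that every column of $D$ is covered an even number of times, so this total must be even. But with $N = 8t+4$ we have $N-1 = 8t+3$ odd and $N/4 = 2t+1$ odd, so $(N-1)\cdot \tfrac{N}{4}$ is odd, a contradiction.

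I expect the only delicate point to be confirming that the four local parameters are rigidly determined, that is, that the slack of $\pm 1/2$ in Lemma \ref{lem:structure} collapses to the exact equalities $a_{I} = b_{I} = c_{I} = d_{I} = N/4$ precisely when $4 \mid N$; this rigidity is what converts the family of weight inequalities into the clean count needed for the parity argument, and the further fact that $N/4$ is \emph{odd} (rather than merely an integer) is exactly what selects the residue class $N \equiv 4 \pmod 8$. Once this is in hand the parity step is immediate, and it is worth noting that it already applies at the smallest instance $N = 4$ (where $D$ is a single column covered by three rows), so the theorem indeed covers the class $k \equiv 5 \pmod 8$ claimed in Theorem B.
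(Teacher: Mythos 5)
Your proof is correct and is essentially identical to the paper's: both pin down $a_{I}=b_{I}=c_{I}=d_{I}=N/4$ for every row beyond $P_{1},P_{2}$ (using that $4\mid N$ collapses the slack in Lemma \ref{lem:structure}), and both then derive the contradiction by double counting the $1$'s in the submatrix on the indices $\{3N/2+1,\dots,2N-1\}$, where the row count gives the odd total $(N-1)\cdot N/4$ while the even-column-weight condition forces an even total. The only cosmetic difference is that you phrase the count as incidences in the region $D$ rather than as entries of a submatrix.
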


\begin{proof}
As a direct consequence of the proof of Lemma \ref{lem:structure}, when $N = 0 \pmod 4$, we have $a_{I} = b_{I} = c_{I} = d_{I} = N/4$ for any new subset $I$. Consider the submatrix of size $N-1 \times N/4$ corresponding to the indices $\{3N/2+1, \dots, 2N-1\}$ (i.e. the indices determining the $d_{I}$'s). All of its rows are the indicator functions of the $N-1$ remaining subsets on $\{3N/2+1, \dots, 2N-1\}$, and so each line of this submatrix has exactly $N/4$ $1$'s. When $N = 4 \pmod 8$, we get that $N/4$ is odd.
Since $N$ is even, $N-1$ is odd too. This means that the whole submatrix has an odd amount of $1$'s, and thus that its rows cannot sum to $0$, contradicting Proposition \ref{prop:matrix}.
\end{proof}

From what we said above and by the asymptotic results of the previous section, it is natural to ask the following.

\begin{question}
Is there a binary minimal code of parameters $[3N, N+1]$ if $N \geq 4$ is even?
\end{question}

We expect a negative answer, but we have no general arguments for the missing cases.

\subsection{The case when $N$ is odd}

When $N$ is odd the situation is significantly more complicated, as two subsets of size $N$ satisfying \eqref{eq:inter} have an intersection whose cardinality is not reduced to a single integer, in fact two are possible.
This means that once we select a subset of size $N$ from $\{1, \dots, 2N-1\}$, say $\{1, \dots, N\}$ (without loss of generality), eliminating subsets of size $N$ as candidates for other rows becomes significantly harder. This makes it more likely for a family of $N+1$ subsets of size $N$ of $\{1, \dots, 2N - 1\}$ to satisfy Lemma \ref{lem:corr}.
Indeed, when $k$ is even (and thus $N$ odd) there are examples of minimal codes of length $3(k-1)$ when $k \in \{2, 4, 6\}$ (for instance in \cite{BBconcat}), while the above subsection shows that this phenomenon is rarer when $k$ is odd.
The following result shows that this streak also interrupts quite early.

\begin{proposition}
When $N = 7$ there is no strong blocking set of size $3N = 21$.
\end{proposition}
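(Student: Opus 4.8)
The plan is to invoke Lemma~\ref{lem:corr}: a binary $[21,8]$ minimal code can exist only if there is a family $P_1,\dots,P_8$ of subsets of $\{1,\dots,13\}$, each of size $7$, with all pairwise intersections in $\{3,4\}$ and with empty total symmetric difference (equivalently, each of the $13$ coordinates lies in an even number of the $P_i$, i.e.\ each column of the submatrix $P$ of Proposition~\ref{prop:matrix} has even weight). I would first extract the extra necessary conditions coming from weights of sums of three rows, exactly as in the proof of Lemma~\ref{lem:structure}: the codeword summing rows $i,j,\ell$ has weight $3+|P_i\Delta P_j\Delta P_\ell|$, and adding $c_H$ to it yields weight $(8-3)+|P_i\Delta P_j\Delta P_\ell|$ since the total symmetric difference is empty. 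The weight bounds $N+1=8\le w\le 2N=14$ then force
\[
5\le |P_i\Delta P_j\Delta P_\ell|\le 9
\]
for every triple, and since this quantity is odd it lies in $\{5,7,9\}$. Writing $s$ for the sum of the three pairwise intersections and $w$ for the triple intersection, one has $|P_i\Delta P_j\Delta P_\ell|=21-2s+4w$, so the condition reads $s-2w\in\{6,7,8\}$.

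Next I would normalize by symmetry. Up to the action of $\mathrm{Sym}(\{1,\dots,13\})$ I may take $P_1=\{1,\dots,7\}$ and split into the two cases $|P_1\cap P_2|=4$ and $|P_1\cap P_2|=3$, fixing $P_2$ accordingly (for instance $P_2=\{4,5,6,7,8,9,10\}$ in the first case). For every further size-$7$ subset $I$ I would record the four parameters $a_I,b_I,c_I,d_I$ counting its intersections with $P_1\setminus P_2$, $P_1\cap P_2$, $P_2\setminus P_1$ and the complement, just as in Section~4.2. Imposing $|I\cap P_1|,\,|I\cap P_2|\in\{3,4\}$ together with the triple condition applied to $(P_1,P_2,I)$ restricts $(a_I,b_I,c_I,d_I)$ to a short explicit list of admissible \emph{types}. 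This is the odd-$N$ analogue of Lemma~\ref{lem:structure}; the essential difference, and the source of the difficulty, is that because $N$ is odd the pairwise intersection is not pinned to a single value, so each new subset admits several types rather than exactly one.

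Finally I would propagate the constraints: a candidate for $P_3$ must be admissible against $(P_1,P_2)$; a candidate for $P_4$ against all of $(P_1,P_2)$, $(P_1,P_3)$, $(P_2,P_3)$; and so on, while the global parity requirement that every coordinate be covered an even number of times must hold for the completed family of eight sets. The claim to be established is that these conditions are jointly unsatisfiable. The main obstacle is precisely the combinatorial branching: unlike the even cases $[18,7]$ and $[24,9]$ treated above, the two admissible intersection sizes make the tree of admissible extensions genuinely wide, so a single clean parity contradiction seems out of reach. I would therefore certify nonexistence either by a disciplined case analysis that repeatedly invokes the triple condition to prune branches (in the spirit of the $[18,7]$ and $[24,9]$ arguments), or, should no short such argument materialise, by a finite exhaustive search over the at most $\binom{13}{7}$ candidate subsets per row, pruned by the type restrictions and the column-parity condition, as was done for the $[30,11]$ case.
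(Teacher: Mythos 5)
Your proposal is correct and matches the paper's approach: the paper also certifies nonexistence for $N=7$ by a pruned exhaustive computer search (in Sagemath, "using a similar method to the one for $N=10$", i.e.\ restricting the remaining admissible subsets as more rows are selected), which is exactly the search you describe, and your preparatory constraints from Lemma~\ref{lem:corr} and the triple-sum weight bounds are derived correctly.
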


\begin{proof}
This was done via a computer search using the programming language Sagemath in $802$ seconds on a single 9th gen. Intel i5 core processor, using a similar method to the one for $N = 10$.
\end{proof}

\section{Acknowledgements}

The author is grateful to Martino Borello and Wolfgang Schmid for their review of this work and their insightful advice.

\bibliography{articles}

\begin{thebibliography}{10}

\bibitem{A}
M.~Aaltonen.
\newblock Linear programming bounds for tree codes (corresp.).
\newblock {\em IEEE Transactions on Information Theory}, 25(1):85--90, 1979.

\bibitem{ABNgeo}
G.~N. Alfarano, M.~Borello, and A.~Neri.
\newblock A geometric characterization of minimal codes and their asymptotic
  performance.
\newblock {\em Advances in Mathematics of Communications}, 16(1):115--133,
  2022.

\bibitem{ABN}
G.~N. Alfarano, M.~Borello, and A.~Neri.
\newblock Outer strong blocking sets.
\newblock {\em arXiv preprint 2301.09590}, 2023.

\bibitem{3CB}
G.~N. Alfarano, M.~Borello, A.~Neri, and A.~Ravagnani.
\newblock Three combinatorial perspectives on minimal codes.
\newblock {\em SIAM Journal on Discrete Mathematics}, 36(1):461--489, 2022.

\bibitem{Expander}
N.~Alon, S.~Das, and A.~Neri.
\newblock Strong blocking sets and minimal codes from expander graphs.
\newblock {\em preprint}, 2023.

\bibitem{AB}
A.~E. Ashikhmin and A.~Barg.
\newblock Minimal vectors in linear codes.
\newblock {\em IEEE Transactions on Information Theory}, 44:2010--2017, 1998.

\bibitem{BBconcat}
D.~Bartoli and M.~Borello.
\newblock Small strong blocking sets by concatenation.
\newblock {\em SIAM Journal on Discrete Mathematics}, 37(1):65--82, 2023.

\bibitem{Bartoli}
D.~Bartoli, A.~Cossidente, G.~Marino, and F.~Pavese.
\newblock On cutting blocking sets and their codes.
\newblock {\em Forum Mathematicum}, 34:347 -- 368, 2020.

\bibitem{Bishnoi}
A.~Bishnoi, J.~D'haeseleer, D.~Gijswijt, and A.~Potukuchi.
\newblock Blocking sets, minimal codes and trifferent codes.
\newblock {\em arXiv preprint 2301.09457}, 2023.

\bibitem{bonini2021minimal}
M.~Bonini and M.~Borello.
\newblock Minimal linear codes arising from blocking sets.
\newblock {\em Journal of Algebraic Combinatorics}, 53:327--341, 2021.

\bibitem{CZ}
G.~Cohen and G.~Zemor.
\newblock Intersecting codes and independent families.
\newblock {\em IEEE Transactions on Information Theory}, 40(6):1872--1881,
  1994.

\bibitem{CL}
G.~D. Cohen and A.~Lempel.
\newblock Linear intersecting codes.
\newblock {\em Discrete Mathematics}, 56:35--43, 1984.

\bibitem{cohen2013minimal}
G.~D. Cohen, S.~Mesnager, and A.~Patey.
\newblock On minimal and quasi-minimal linear codes.
\newblock In {\em Cryptography and Coding: 14th IMA International Conference,
  IMACC 2013, Oxford, UK, December 17-19, 2013. Proceedings 14}, pages 85--98.
  Springer, 2013.

\bibitem{davydov2011linear}
A.~A. Davydov, M.~Giulietti, S.~Marcugini, and F.~Pambianco.
\newblock Linear nonbinary covering codes and saturating sets in projective
  spaces.
\newblock {\em Advances in Mathematics of Communications}, 5(1):119, 2011.

\bibitem{ding2015class}
K.~Ding and C.~Ding.
\newblock A class of two-weight and three-weight codes and their applications
  in secret sharing.
\newblock {\em IEEE Transactions on Information Theory}, 61(11):5835--5842,
  2015.

\bibitem{HN}
T.~Heger and Z.~L. Nagy.
\newblock Short minimal codes and covering codes via strong blocking sets in
  projective spaces.
\newblock {\em IEEE Transactions on Information Theory}, 68:881--890, 2021.

\bibitem{jamison1977covering}
R.~E. Jamison.
\newblock Covering finite fields with cosets of subspaces.
\newblock {\em Journal of Combinatorial Theory, Series A}, 22(3):253--266,
  1977.

\bibitem{KS}
G.~Katona and J.~Srivastava.
\newblock Minimal 2-coverings of a finite affine space based on gf(2).
\newblock {\em Journal of Statistical Planning and Inference}, 8(3):375--388,
  1983.

\bibitem{M1}
J.~L. Massey.
\newblock Minimal codewords and secret sharing.
\newblock pages 276--279, 1993.

\bibitem{M2}
J.~L. Massey.
\newblock Some applications of coding theory in cryptography.
\newblock {\em Codes and Ciphers: Cryptography and Coding IV}, pages 33--47,
  1995.

\bibitem{MRRW}
R.~McEliece, E.~Rodemich, H.~Rumsey, and L.~Welch.
\newblock New upper bounds on the rate of a code via the delsarte-macwilliams
  inequalities.
\newblock {\em IEEE Transactions on Information Theory}, 23(2):157--166, 1977.

\bibitem{mesnager2020minimal}
S.~Mesnager, Y.~Qi, H.~Ru, and C.~Tang.
\newblock Minimal linear codes from characteristic functions.
\newblock {\em IEEE Transactions on Information Theory}, 66(9):5404--5413,
  2020.

\bibitem{mesnager2019several}
S.~Mesnager and A.~S{\i}nak.
\newblock Several classes of minimal linear codes with few weights from weakly
  regular plateaued functions.
\newblock {\em IEEE Transactions on Information Theory}, 66(4):2296--2310,
  2019.

\bibitem{SP}
A.~Plagne and W.~A. Schmid.
\newblock An application of coding theory to estimating davenport constants.
\newblock {\em Designs, Codes and Cryptography}, 61:105--118, 2010.

\bibitem{Sloane}
N.~J. Sloane.
\newblock Covering arrays and intersecting codes.
\newblock {\em Journal of combinatorial designs}, 1(1):51--63, 1993.

\bibitem{Tang}
C.~Tang, Y.~Qiu, Q.~Liao, and Z.~Zhou.
\newblock Full characterization of minimal linear codes as cutting blocking
  sets.
\newblock {\em IEEE Transactions on Information Theory}, 67(6):3690--3700,
  2021.

\end{thebibliography}
\bibliographystyle{abbrv}

\end{document}